\renewenvironment{proof}{\emph{Proof.}}{\hspace*{\fill}{\mbox{$\Box$}}}
\newcommand{\secref}[1]{Sect.~\ref{#1}}
\newcommand{\figref}[1]{Fig.~\ref{#1}}
\newcommand{\theoremref}[1]{Theorem~\ref{#1}}
\newcommand{\lemmaref}[1]{Lemma~\ref{#1}}
\newcommand{\defref}[1]{Def~\ref{#1}}
\newcommand{\union}{\cup}
\newcommand{\verule}{\rule{1pt}{10pt}\kern 0.2em}
\newcommand{\remms}[2][]{}
\begin{document}

\title{The satisfiability problem of SROIQ$^\sigma$ is decidable}

\author{ 
Jon Ha\"el Brenas$^1\;\;$ Rachid Echahed$^1\;\;$Martin Strecker$^2$\\
}
\institute{$^1$ CNRS and University of Grenoble\\$^2$ Université de Toulouse / IRIT}

\maketitle

\begin{abstract}
We consider a dynamic extension of the description logic
$\mathcal{SROIQ}$. This means that interpretations could evolve thanks to some
actions such as addition and/or deletion of an element (respectively, a pair
of elements) of a concept (respectively, of a role).  The obtained logic is
called $\mathcal{SROIQ}$ with explicit substitutions and is written
$\mathcal{SROIQ^\sigma}$. Substitution is not treated as meta-operation that
is carried out immediately, but the operation of substitution may be delayed,
so that sub-formulae of
$\mathcal{SROIQ}^\sigma$ are of the form $\Phi\sigma$, where $\Phi$ is a
$\mathcal{SROIQ}$ formula and $\sigma$ is a substitution which encodes changes
of concepts and roles.  In this paper, we particularly prove that the
satisfiability problem of $\mathcal{SROIQ}^\sigma$ is decidable.
\end{abstract}

\section{Introduction}\label{sec:intro}

Description Logics \cite{Baader2003} are logical formalisms for representing
information about classes and objects. They are very often used as the basis
of knowledge representation systems and have been used recently to develop OWL
semantic web language, which can be viewed as an expressive Description Logic
(DL).

There is an impressive variety of DLs. What they all have in
common is that they are tailored to describe an established and fixed ontology
and reason about its properties, even if they differ with respect to
expressiveness and complexity of reasoning primitives.

Our goal here is to consider properties over dynamic ontologies. We introduce
the notion of substitutions to express changes of ontologies (e.g. addition or
deletion of an element, respectively a pair of elements, of a concept,
respectively of a role). We investigate the addition of such substitutions in
the particular case of the logic $\mathcal{SROIQ}^\sigma$, an extension of the
logic $\mathcal{SROIQ}$ \cite{horrocks_kutz_sattler_kr_2006}. We mainly show
that the problem of satisfiability in $\mathcal{SROIQ}^\sigma$ is still
decidable.

Our interest in the topic has arisen out of previous work on verification of
transformations of graph structures
\cite{brenas14:_hoare_like_calcul_using_sroiq,chaabani:dl_transfo2013}. These
structures can be understood as models of DL formulae. We describe
transformations by imperative programs with an appropriately adapted
instruction set. The substitutions of $\mathcal{SROIQ}^\sigma$ arise when
computing weakest preconditions. More in general and apart from this
particular application context, our results can be grafted on other
formalisms, such as the variant of the DL $\mathcal{ALC}$ considered in
\cite{ahmetaj14:_manag_chang_graph_data_using_descr_logic}.

The paper is structured as follows: \secref{sec:syntax} introduces $\mathcal{SROIQ}^\sigma$, our version of $\mathcal{SROIQ}$ with substitutions. \secref{sec:interp} defines the interpretations for that logic. \secref{sec:decidability} proves that $\mathcal{SROIQ}^\sigma$ is decidable by transforming formulae in $\mathcal{SROIQ}^\sigma$ to formulae in $\mathcal{SROIQ}$. \secref{sec:example} illustrates such transformations. Finally, \secref{sec:conclusion} concludes the paper.

\section{Syntax}\label{sec:syntax}

In this section, we define the syntax of the logics $\mathcal{SROIQ}$
and $\mathcal{SROIQ^\sigma}$. We start by introducing
$\mathcal{SROIQ}$\cite{horrocks_kutz_sattler_kr_2006}.

Usually, Description Logics are split into Aboxes (containing
assertions like \textbf{C}(\textsf{a}), meaning that element
\textsf{a} satisfies concept \textbf{C}), Tboxes (containing concept
inclusions like \textbf{C} $\subseteq$ \textbf{D}, meaning that each
element satisfying \textbf{C} also satisfies \textbf{D}) and Rboxes
(containing role axioms). It is proved in \cite{horrocks_kutz_sattler_kr_2006} that
Aboxes and Tboxes can be reduced. Thus we focus on $\mathcal{SROIQ}$
concepts and role axioms only.

%\paragraph{Signature}
\begin{definition}[Signature, Concept name; Nominals; Role names;
  Individuals; Signature]
Let \textbf{C} be a set of \textbf{concept names} and \textbf{N}  a set
of \textbf{nominals} with \textbf{C} and \textbf{N} being disjoint, \textbf{R} a set of \textbf{role
names} including the universal role $U$ and \textbf{I} a set of
\textbf{individuals}. The set of roles is \textbf{R} $\union \{R^- | R
\in \textbf{R}\}$, where a role $R^-$ is called the \textbf{inverse
role} of $R$. We define the signature $\Sigma$ as $\Sigma = (\textbf{C},\textbf{N},\textbf{R},\textbf{I})$.
\end{definition}

We will use a running example, the description of a family, as a way to illustrate the diverse
components of our logic. 

\begin{example}
 In this example, \textbf{C} will be \{\textbf{Animal}, \textbf{Female}, \textbf{Male}
\}, \textbf{N} will be \{\texttt{Alice}, \texttt{Bob},
\texttt{Charles}\}, \textbf{R} will be \{$U$, \textsc{Offspring}, \textsc{Parent},
\textsc{Owner}, \textsc{Brother}, \textsc{Sister},
\textsc{FamilyMember}\} and \textbf{I} will be \{\textsf{Alice}\}.
\end{example}

%\paragraph{Role axioms}

In $\mathcal{SROIQ}$, one can provide role axioms that state global
properties of roles. They can take two different forms: role
hierarchies and role assertions. For decidability reasons, we need the notions of
regular hierarchy over roles as well as simple roles. We define these notions hereafter.

\begin{definition}[Regular order]
A strict partial order $\prec$ on a set $A$ is an irreflexive and transitive relation on $A$. A strict partial order $\prec$ on the set of roles is called a \textbf{regular order} if $\prec$ satisfies, additionally, $S \prec R \Leftrightarrow S^- \prec R$ for all roles $R$ and $S$. 
\end{definition}
\begin{definition}[Role inclusion axiom]
  A \textbf{role inclusion axiom} is an expression of the form $w \subseteq R$
  where \textit{w} is a finite string of roles not containing the universal role \textit{U} and
  \textit{R} is a role name, with \textit{R} $\neq$\textit{U}  . A \textbf{role hierarchy} $\mathcal{R}_h$ is a
  finite set of role inclusion axioms.  A role inclusion axiom $w
  \subseteq R$ is $\prec$\textbf{-regular} if \textit{R} is a role name and \textit{w}
  is defined by the following grammar: \\
\textit{w} = \textit{RR} \verule  $R^-$ \verule $S_1\dots S_n$ \verule $RS_1\dots S_n$ \verule $S_1\dots S_nR$ with $S_i\prec$ R for all 1 $\leq i \leq n$.   Finally, a role hierarchy $\mathcal{R}_h$ is \textbf{regular} if there
  exists a regular order $\prec$ such that each role inclusion axiom in
  $\mathcal{R}_h$ is $\prec$-regular.
\end{definition}

\begin{example}
For instance, \textsc{Brother} $\subseteq$ \textsc{FamilyMember} (that
is the brother of a person is part of her family) and
\textsc{Father} \textsc{Brother} $\subseteq$ \textsc{Father} (that is
the father of a person's brother is her father) are role inclusion
axioms that make sense.
\end{example}

The second possible kind of role axiom is the role assertion.

\begin{definition}[Role assertion]
For roles $R$,$S$, we call the assertions $Ref(R)$, $Irr(R)$,
$Sym(R)$, $Asy(R)$, $Tra(R)$ and $Dis(R,S)$ \textbf{role
assertions}. They, respectively, mean that $R$ is reflexive,
irreflexive, symmetric, asymmetric, transitive and that $R$ and $S$
are disjoint.
\end{definition}

\begin{example}
For instance, $Tra$(\textsc{FamilyMember}) and $Irr$(\textsc{Father})
are possible role assertions.
\end{example}

\begin{definition}[Simple role; Simple assertion]
  Given a role hierarchy $\mathcal{R}_h$ and a set of role assertions
  $\mathcal{R}_a$, a simple role is inductively defined as either a role name
  that does not occur in the right-hand side of any role inclusion axiom, or $R^-$
  for $R$ simple, or the right-hand side of a role inclusion axiom  $w \subseteq
  R$ where $w$ is a simple role.  $\mathcal{R}_a$ is called \textbf{simple} if
  all roles appearing in role assertions are simple.
\end{definition}

From now on, the only role hierarchies that we consider are
regular and the only sets of role assertions that we consider are
finite and simple.

%\paragraph{Concepts}

The definition of the concept constructors is the difference between
$\mathcal{SROIQ}$ and $\mathcal{SROIQ}^\sigma$. Below is the
definition of the concepts of the logic $\mathcal{SROIQ}$.

\begin{definition}[$\mathcal{SROIQ}$ concepts]\label{SROIQConcept}
The set of concepts is defined as the smallest set containing:\\
\begin{tabular}[h]{lcll}
   $C$ & ::= & $\bot$     &         \mbox{(empty concept)} \\
       & $|$ & $c$        &         \mbox{(concept name)} \\
       & $|$ & $\neg\; C$     &    \mbox{(negation)} \\
       & $|$ & $C\; \sqcap\; D$   &   \mbox{(conjunction)} \\
       & $|$ & $C\; \sqcup\; D$  &    \mbox{(disjunction)} \\
       & $|$ & $(\geq n\; S\; C)$   &     \mbox{(at least)} \\
       & $|$ & $(< n\; S\; C)$   &     \mbox{(no more than)} \\
       & $|$ & $\exists R.C$ & \mbox{(exists)}\\
       & $|$ & $\forall R.C$ & \mbox{(for all)}\\
       & $|$ & $ o $ &  \mbox{(nominal)} \\
       & $|$ & $\exists R.Self$ & \mbox{(local reflexivity)}
%       & $|$ & $C subst $ & \mbox{(explicit substitution)}
\end{tabular}

where $c$ is a concept name, $R$ is a role, $S$ is a simple role, $o$ is a
nominal and $C$, $D$ are concepts.
\end{definition}

\begin{example}
For example, the concept $\mathfrak{C}$ = (\texttt{Alice} $\sqcap\; (\exists$ 
\textsc{Brother}.\texttt{Bob}) $\sqcap\; \forall$ \textsc{Sister}$^-$.\textbf{Male})
$\sqcup$ ($(< 3\; \textsc{Parent}\; \top) \sqcap (\geq 1\; \textsc{Owner}^-\;
\textbf{Animal}) \sqcap \exists \textsc{FamilyMember}.Self$) is the
concept satisfied by Alice (\texttt{Alice}) if she has a brother named
Bob ($\exists$ 
\textsc{Brother}.\texttt{Bob}) and if all the persons whom she is a sister
of are males ($\forall$ \textsc{Sister}$^-$.\textbf{Male}) or by
anyone having strictly less than 3 parents ($(< 3\; \textsc{Parent}\;
\top)$), who is the owner of at least 1 animal ($(\geq 1\; \textsc{Owner}^-\;
\textbf{Animal})$) and who is a member of her own family ($\exists \textsc{FamilyMember}.Self$).
\end{example}

In order to define the concepts of $\mathcal{SROIQ}^\sigma$, we
need to introduce the notion of substitution. Substitutions, given in the definition below, are intended to modify
concepts (respectively roles) by adding or removing individuals
(respectively pairs of individuals).

\begin{definition}[Substitution]
Given a role name $R$, a concept name $C$ and individuals $i$ and $j$,
a substitution $subst$ is: \\
\begin{tabular}[h]{lcll}
   $subst$ & ::= & $\epsilon$  & \mbox{(empty substitution)}       \\
       & $|$ & $[RS]$ & \mbox{(role substitution)}        \\
       & $|$ & $[CS]$ & \mbox{(concept substitution)}
\end{tabular}\\
A role substitution can be either:\\
 \begin{tabular}[h]{lcll}
   $RS$ & ::= &   $R := R - (i, j)$ &         \mbox{(deletion of relation instance)} \\
       & $|$ &  $R := R + (i, j)$&         \mbox{(insertion of relation instance)} \\
\end{tabular}\\
while a concept substitution is either:\\
\begin{tabular}[h]{lcll}
   $CS$ & ::= &  $c := c - i$ &         \mbox{(deletion of a concept instance)} \\
       & $|$ &  $c := c + i$&         \mbox{(insertion of a concept instance)} \\
\end{tabular}\\
\end{definition}

\begin{definition}[$\mathcal{SROIQ}^\sigma$ concepts]
The concepts of the logic $\mathcal{SROIQ}^\sigma$ are those given
in \defref{SROIQConcept} in addition to the following explicit
substitution constructor: 
\begin{tabular}[h]{lcll}
   $C$ & ::= & ($\mathcal{SROIQ}$ concept) &  \mbox{(\defref{SROIQConcept})}\\
       % & $|$ & $c$        &         \mbox{(concept name)} \\
       % & $|$ & $\neg\; C$     &    \mbox{(negation)} \\
       % & $|$ & $C\; \sqcap\; D$   &   \mbox{(conjunction)} \\
       % & $|$ & $C\; \sqcup\; D$  &    \mbox{(disjunction)} \\
       % & $|$ & $(\geq n\; S\; C)$   &     \mbox{(at least)} \\
       % & $|$ & $(< n\; S\; C)$   &     \mbox{(no more than)} \\
       % & $|$ & $\exists R.C$ & \mbox{(exists)}\\
       % & $|$ & $\forall R.C$ & \mbox{(for all)}\\
       % & $|$ & $ o $ &  \mbox{(nominal)} \\
 %      & $|$ & $\exists R.Self$ & \mbox{(local reflexivity)}
       & $|$ & $C subst $ & \mbox{(explicit substitution)}
%$C subst $ \mbox{(explicit substitution)}. 
\end{tabular}
\end{definition}

\begin{example}
Roughly speaking, the $\mathcal{SROIQ}^\sigma$ concept ($\exists
\textsc{Sister}.\textbf{Female}$)[\textbf{Female} := \textbf{Female} +
\textsf{Alice}] expresses the fact
``there exists a sister who is female'' \emph{once the concept name Female has
been dynamically modified to include Alice}. 
\end{example}

% Substitutions are the
% only difference between $\mathcal{SROIQ}$ and
% $\mathcal{SROIQ}^\sigma$. 

\section{Interpretations and models}\label{sec:interp}

\begin{definition}\label{def:defintnames}
Let $\Sigma = (\textbf{C}, \textbf{N}, \textbf{R}, \textbf{I})$ be a
signature. As usual, an \textbf{interpretation} (over $\Sigma$) $\mathcal{I} =
(\Delta^{\mathcal{I}},.^{\mathcal{I}})$ consists of a set
$\Delta^{\mathcal{I}}$, called the \textbf{domain} of $\mathcal{I}$,
and a \textbf{valuation} $.^{\mathcal{I}}$ which associates to every
concept name $C$ a set $C^{\mathcal{I}}$ such that $C^{\mathcal{I}}
\subseteq \Delta^{\mathcal{I}}$ and to each role name $R$ a binary
relation $R^{\mathcal{I}}$ such that $R^{\mathcal{I}} \subseteq
\Delta^{\mathcal{I}} \times \Delta^{\mathcal{I}}$. The valuation of a
nominal $o$ is a singleton and the valuation of the
universal role $U$ is the universal relation $\Delta^{\mathcal{I}}
\times \Delta^{\mathcal{I}}$. Finally, $\forall x \in \textbf{I}$,
$x^\mathcal{I} \in \Delta^{\mathcal{I}}$. 
\end{definition}

As usual, $(R^-)^{\mathcal{I}} = \{(y,x) | (x,y) \in
R^{\mathcal{I}}\}$ and for $w = Rw_1$, $w^\mathcal{I} = \{(x,y) \vrule \exists z
\in \Delta$ such that $(x,z) \in R^\mathcal{I}$ and $(z,y) \in w_1^\mathcal{I}\}$.

\begin{definition}
An interpretation $\mathcal{I}$ \textbf{satisfies} a role inclusion
axiom $w \subseteq R$, written $\mathcal{I} \models w \subseteq R'$, if
$w^{\mathcal{I}} \subseteq R^{\mathcal{I}}$. An interpretation is a
\textbf{model} of a role hierarchy $\mathcal{R}_h$ if it satisfies all
role inclusion axioms in $\mathcal{R}_h$, written $\mathcal{I} \models
\mathcal{R}_h$.

As for role assertion axioms, for each interpretation $\mathcal{I}$ and all x, y, z $\in \Delta^\mathcal{I}$, we have:\\
\begin{tabular}{ll}
$\mathcal{I} \models Asy(S) \text{ iff } (x,y) \in S^\mathcal{I} \Rightarrow (y,x) \not\in S^\mathcal{I}$ & \mbox{(role asymmetry)}\\
$\mathcal{I} \models Ref(S) \text{ iff } Diag^\mathcal{I} \subseteq S^\mathcal{I}$ & \mbox{(role reflexivity)}\\
$\mathcal{I} \models Irr(S) \text{ iff } Diag^\mathcal{I} \cap S^\mathcal{I} = \emptyset$ & \mbox{(role irefflexivity)}\\
$\mathcal{I} \models Dis(S_1,S_2) \text{ iff } S_1^\mathcal{I} \cap S_2^\mathcal{I} = \emptyset$ & \mbox{(role disjunction)}\\
\end{tabular} 

\noindent where $Diag^\mathcal{I} = \{(x,x) | x \in \Delta^\mathcal{I}\}$.
\end{definition}
\begin{definition}
For each interpretation $\mathcal{I}$, the valuations of
$\mathcal{SROIQ}$ concepts are defined as:
\begin{itemize}
\item $\bot^\mathcal{I} = \emptyset$
\item $(\neg\;C)^\mathcal{I} = \Delta^\mathcal{I}\backslash C^\mathcal{I}$
\item $(C\; \sqcap\; D)^\mathcal{I} = C^\mathcal{I} \cap D^\mathcal{I}$
\item $(C\; \sqcup\; D)^\mathcal{I} = C^\mathcal{I} \cup D^\mathcal{I}$
\item $(\geq n\; S\; C)^\mathcal{I} = \{x\; |\; card\{y\; |\; (x,y) \in S^\mathcal{I} \wedge y \in C^\mathcal{I}\} \geq n\}$
\item $(< n\; S\; C)^\mathcal{I} = \{x\; |\; card\{y\; |\; (x,y) \in S^\mathcal{I} \wedge y \in C^\mathcal{I}\} < n\}$
\item $(\exists R. C)^\mathcal{I} = \{x\; |\; \exists y. (x,y) \in R^\mathcal{I} \wedge y \in C^\mathcal{I}\}$
\item $(\forall R. C)^\mathcal{I} = \{x\; |\; \forall y. (x,y) \in R^\mathcal{I} \Rightarrow y \in C^\mathcal{I}\}$
\item $(\exists R. Self)^\mathcal{I} = \{x\; |\; (x,x) \in R^\mathcal{I}\}$
\end{itemize}
The valuation for concept name and nominal are not repeated as they
have been previously defined in \defref{def:defintnames}.
\end{definition}

\begin{example}
The valuation of $\mathfrak{C}$ is then
$\{x\; |\; (\texttt{Alice}^\mathcal{I} = x \wedge \exists y. ((x,y) \in
\textsc{Brother}^\mathcal{I} \wedge \texttt{Bob}^\mathcal{I} = y)
\wedge \forall z. ((z,x) \in \textsc{Sister}^\mathcal{I} \Rightarrow z
\in \textbf{Male}^\mathcal{I})) \vee (card\{a\; |\; (x,a) \in
\textsc{Parent}^\mathcal{I}\} < 3 \wedge card\{b\; |\; (b,x) \in
\textsc{Owner}^\mathcal{I} \wedge b \in \textbf{Animal}^\mathcal{I}\}
\geq 1 \wedge (x,x) \in \textsc{FamilyMember}^\mathcal{I}) \}$.
\end{example}

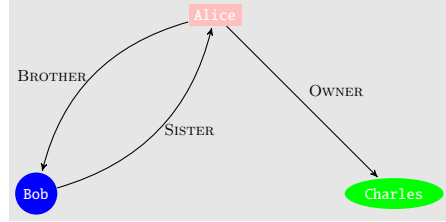
\begin{figure}
\def\smallscale{0.8}
\begin{center}
\resizebox{6cm}{!}{
\begin{tikzpicture}[->,>=stealth',shorten >=1pt,auto,node distance=5cm,
                    semithick,group/.style ={fill=gray!20, node distance=30mm},thickline/.style ={draw, thick, -latex'}][scale = 0.2]
  \tikzstyle{Male}=[circle,fill=blue,draw=none,text=white]
  \tikzstyle{Female}=[fill=pink,draw=none,text=white]
  \tikzstyle{Animal}=[ellipse,fill=green,draw=none,text=white]

  \node[Female] (A)                     {\texttt{Alice}};
  \node[Male] (B)  [below left of=A]  {\texttt{Bob}};
  \node[Animal] (C)  [below right of=A] {\texttt{Charles}};

\begin{pgfonlayer}{background}
\node [group, fit=(A) (B) (C)] (b) {};
\end{pgfonlayer}

\path (A)  edge [bend right] node[left] {\textsc{Brother}} (B)
             edge              node {\textsc{Owner}} (C)
             (B) edge [bend right] node[right] {\textsc{Sister}} (A);
\begin{scope}
\end{scope}
\end{tikzpicture}
}
\end{center}
\caption{Example of a model of $\mathfrak{C}$.
Pink rectangle nodes are \textbf{Female}, round blue nodes are
\textbf{Male} and green elliptic nodes are \textbf{Animal}.}\label{fig:exC}
\end{figure}

\begin{definition}
Let $c$ be a concept name, $i$ and $j$ be individuals, $C$ be a concept and
$r$ be a role name, the interpretations of concepts with substitutions are: 
\begin{itemize}
\item $(C \epsilon)^\mathcal{I} = C^\mathcal{I}$
\item $(C [c := c + i])^\mathcal{I} = C^\mathcal{I}$ where
$c^\mathcal{I}$ is replaced by $c^\mathcal{I} \cup i^\mathcal{I}$
\item $(C [c := c - i])^\mathcal{I} = C^\mathcal{I}$ where
$c^\mathcal{I}$ is replaced by $c^\mathcal{I} \cap \overline{i^\mathcal{I}}$
\item $(C [r := r + (i, j)])^\mathcal{I} = C^\mathcal{I}$ where
$r^\mathcal{I}$ is replaced by $r^\mathcal{I} \cup \{(i^\mathcal{I},j^\mathcal{I})\}$
\item $(C [r := r - (i, j)])^\mathcal{I} = C^\mathcal{I}$ where
$r^\mathcal{I}$ is replaced by $r^\mathcal{I} \cap \overline{\{(i^\mathcal{I}, j^\mathcal{I})\}}$
\end{itemize}
\end{definition}

\begin{example}
The valuation of ($\exists
\textsc{Sister}.\textbf{Female}$)[\textbf{Female} := \textbf{Female} +
\textsf{Alice}] is then $\{x\; |\; \exists y. ((x,y) \in
\textsc{Sister}^\mathcal{I} \wedge y \in (\textbf{Female}^\mathcal{I}
\cup \textsf{Alice}^\mathcal{I})\}$.
\end{example}

\section{Decidability of the satisfiability problem in $\mathcal{SROIQ}^\sigma$}\label{sec:decidability}

It is known that the satisfiability problem of $\mathcal{SROIQ}$
is decidable \cite{horrocks_kutz_sattler_kr_2006}. We show below that this nice
property still holds when considering explicit substitutions.

\begin{theorem}\label{theo:Decidability}
Let $\Phi_0$ be a $\mathcal{SROIQ^\sigma}$ concept, $\mathcal{R}_h$ be
a regular role hierarchy and $\mathcal{R}_a$ be
a finite set of role assertions. The satisfiability of $\Phi_0$
w.r.t. $\mathcal{R}_h$ and $\mathcal{R}_a$ is decidable.
\end{theorem}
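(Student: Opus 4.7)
The plan is to translate each $\mathcal{SROIQ}^\sigma$ concept $\Phi_0$ into a semantically equivalent pure $\mathcal{SROIQ}$ concept $\tau(\Phi_0)$, so that satisfiability of $\Phi_0$ with respect to $\mathcal{R}_h$ and $\mathcal{R}_a$ reduces to satisfiability of $\tau(\Phi_0)$ with respect to the same role context; the latter is decidable by the classical result of \cite{horrocks_kutz_sattler_kr_2006}. Note that substitutions only appear inside concepts, so $\mathcal{R}_h$ and $\mathcal{R}_a$ carry over untouched.

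I would define $\tau$ by exhaustively applying a set of rewriting rules that push each substitution inward until it reaches a concept or role name. Substitutions distribute over the Boolean and quantified constructors in the expected way: $(\neg C)\sigma \to \neg(C\sigma)$, $(C \sqcap D)\sigma \to (C\sigma) \sqcap (D\sigma)$, $(C \sqcup D)\sigma \to (C\sigma) \sqcup (D\sigma)$, and $\bot\sigma \to \bot$; nominals and concept names different from the one being substituted are left unchanged. At the concept-name leaves, the collapsing rules use nominals and $\mathcal{SROIQ}$ negation: $c[c := c + i] \equiv c \sqcup \{i\}$ and $c[c := c - i] \equiv c \sqcap \neg\{i\}$.

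The interesting rules are those that push a role substitution $[R := R \pm (i,j)]$ through a role quantifier, number restriction, or local reflexivity, because the modification must be applied uniformly to every occurrence of $R$ (and, by symmetry, $R^-$) beneath it. For example, writing $\sigma = [R := R + (i,j)]$, I would rewrite $(\exists R.C)\sigma$ as $(\exists R.(C\sigma)) \sqcup \bigl(\{i\} \sqcap \exists U.(\{j\} \sqcap C\sigma)\bigr)$, using the universal role $U$ and nominals to isolate the new source and target; the case $\forall R.C$ is dual and $(\exists R.\mathit{Self})\sigma$ is straightforward. For $(\geq n\; R\; C)$ and $(< n\; R\; C)$ the rewriting must also account for whether the added pair is already present in $R^\mathcal{I}$, which is expressible by the guard $\neg \exists R.\{j\}$ together with the membership test $\exists U.(\{j\} \sqcap C\sigma)$; deletions are treated symmetrically with $\neg\{j\}$ fillers. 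Substitutions on a role different from $R$ modulo inverse simply propagate into subconcepts without modifying the outer constructor.

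Termination of the rewriting follows from a multiset ordering argument in the sense of Dershowitz--Manna: the multiset of sizes of concepts currently carrying a substitution strictly decreases at each step, since every rule either eliminates a substitution at a leaf or replaces a substituted concept $C_1 \star C_2$ by several substituted strict subconcepts, while the connectives introduced by the role-quantifier and number-restriction rules are themselves substitution-free. Soundness is established rule by rule by direct unfolding of the valuations in \defref{def:defintnames}. The main obstacle I expect is the combinatorial case analysis in the number-restriction rules, where the possibility that the added or deleted pair already belongs to, or is missing from, $R^\mathcal{I}$ must be mirrored inside the target language without introducing non-simple roles in number restrictions; once this is handled, $\tau(\Phi_0)$ is a genuine $\mathcal{SROIQ}$ concept with the same models as $\Phi_0$ modulo $\mathcal{R}_h$ and $\mathcal{R}_a$, and decidability of its satisfiability problem follows from \cite{horrocks_kutz_sattler_kr_2006}.
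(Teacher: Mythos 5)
Your proposal follows essentially the same route as the paper: a terminating rewrite system that pushes substitutions inward using nominals and the universal role $U$, proved sound rule by rule against the valuations, with normal forms being substitution-free $\mathcal{SROIQ}$ concepts and decidability then inherited from \cite{horrocks_kutz_sattler_kr_2006}. The only divergences are cosmetic: you use a Dershowitz--Manna multiset measure for termination where the paper uses a lexicographic pair of integer measures $(\mathcal{M},\mathcal{M}')$, and your disjunctive form of, e.g., the $(\exists R.C)[R := R + (i,j)]$ rule is logically equivalent to the paper's implication--conjunction form (rule 30).
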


The proof of \theoremref{theo:Decidability} is done by
translating concepts with substitutions into concepts without
substitutions.

%\subsection{Transformation}

The idea is to associate to each possible pair of a concept and a
substitution an equivalent concept not containing any substitution. In
the following, these pairs are grouped by concept constructor as two different
substitutions affecting the same concept constructor are often treated the
same. From now on, 
%\begin{definition}
%\label{def:trans}
% We introduce transformations of the concepts with substitutions. 
let $c$ and $c'$ be different concept names, $C$ and $D$ be concepts, $R$
and $R'$ be role names such that $R \neq R'$, $S$ be a simple role
name such
that $S \neq R'$,
$o$ be a nominal, $\theta$ be a substitution,
$o_i$ be a nominal associated to individual $i$ such that $o_i^\mathcal{I} = i^\mathcal{I}$ and $\bowtie$
be either $<$ or $\geq$. For ease of reading, we define $C \Rightarrow
D$ as $\neg C \sqcup D$. 

The translation of concepts with substitutions into concepts without
substitutions is defined by means of a system of 41 rules, that we
name $\mathcal{T}$, displayed below.

% We define below by induction on the structure of concepts a set of
%transformation rules which transform concepts of $\mathcal{SROIQ}^\sigma$ into concepts of $\mathcal{SROIQ}$:
\begin{description}
\item[1] $\bot\; \theta \leadsto \bot$ 
\item[2] $o\; \theta \leadsto o$
\end{description}
Since substitutions do not affect the valuations of $\bot$ and the
nominals, viz. $o$, the valuations are the same with or without substitutions.
\begin{description}  
\item[3] $c [R := R \pm (i,j)] \leadsto c$
\item[4] $c [c' := c' \pm i] \leadsto c$
\item[5] $c [c := c + i] \leadsto c \sqcup o_i$
\item[6] $c [c := c - i] \leadsto c \sqcap \neg o_i$
\end{description}
The substitutions $[R := R \pm (i,j)]$ and $[c' := c' \pm i]$ do not
affect the valuation of concept name $c$ thus the valuation is the
same with or without them. 
On the other hand, if one adds element $i$ to
(resp. removes it from) the
valuation of concept name $c$, viz. rule 5 (resp rule 6), then an element will be in the valuation of $c$
iff it was before or if it is $i$ itself (resp. it was before and it is not
$i$).
\begin{description}
\item[7] $(\neg C)\; \theta \leadsto \neg(C\; \theta)$
\item[8] $(C \sqcup D)\;\theta \leadsto C\; \theta \sqcup D\; \theta$
\item[9] $(C \sqcap D)\; \theta \leadsto C\; \theta \sqcap D\; \theta$
\end{description}
Substitutions are propagated along boolean operators in an obvious
way.
\begin{description}
\item[10] $\exists R.Self [c := c \pm i] \leadsto \exists
  R.Self$
\item[11] $\exists R^-.Self [c := c \pm i] \leadsto \exists
  R^-.Self$
\item[12] $\exists R.Self [R' := R' \pm (i,j)] \leadsto \exists
  R.Self$
\item[13] $\exists R^-.Self [R' := R' \pm (i,j)] \leadsto \exists
  R^-.Self$
\end{description}
Substitutions $[c := c \pm i]$ and $[R' := R' \pm (i,j)]$ do not
affect the role name $R$, hence the rules 10 -- 14.
\begin{description}
\item[14] $\exists R.Self [R := R + (i,j)] \leadsto (o_i \sqcap
  o_j) \sqcup \exists R.Self$
\item[15] $\exists R.Self [R := R - (i,j)] \leadsto (\neg o_i \sqcup
 \neg o_j) \sqcap \exists R.Self$
\item[16] $\exists R^-.Self [R := R - (i,j)] \leadsto (\neg o_i \sqcup
 \neg o_j) \sqcap \exists R^-.Self$
\item[17] $\exists R^-.Self [R := R + (i,j)] \leadsto (o_i \sqcap
  o_j) \sqcup \exists R^-.Self$
\end{description}
$\exists R.Self$ is satisfied by an element, say $k$, when adding $(i,j)$ to (resp. removing
from) $R$ iff it was already satisfied or $k = i = j$
(resp. it was already satisfied and either $k \neq i$ or
$k \neq j$). The direction of the self-loop being irrelevant, the
translations are the same for $\exists R^-.Self$.
\begin{description}
\item[18]$(\bowtie\; n\; S\; C) [c := c \pm i] \leadsto
  (\bowtie\; n\; S\; C [c := c \pm i]) $
\item[19] $(\bowtie\; n\; S^-\; C) [c := c \pm i] \leadsto
  (\bowtie\; n\; S^-\; C [c := c \pm i]) $
\item[20] $(\bowtie\; n\; S\; C) [R' := R' \pm (i,j)] \leadsto
  (\bowtie\; n\; S\; C [R' := R' \pm (i,j)]) $
\item[21] $(\bowtie\; n\; S^-\; C) [R' := R' \pm (i,j)] \leadsto
  (\bowtie\; n\; S^-\; C [R' := R' \pm (i,j)]) $
\end{description}
Substitutions $[c := c \pm i]$ and $[R' := R' \pm (i,j)]$ do not
modify the valuation of $S$, hence the rules 18 -- 21.
\begin{description}
\item[22] % \begin{tabbing}
% $(\bowtie\; n\; S\; C) [S := S + (i,j)]$ \= $\leadsto$ \= $((o_i \sqcap \exists U. (o_j\; \sqcap C [S := S + (i,j)])
%  \sqcap \forall S. \neg o_j)$ \hspace{0.3cm}\= $\Rightarrow$ \\
%  \> \>  $(\bowtie\; (n-1)\; S\; C [S := S + (i,j)]) )$\\
%  \> $\sqcap$ \> $((\neg o_i \sqcup \forall U. (\neg o_j
%   \sqcup \neg C [S := S + (i,j)]) \sqcup \exists S. o_j)$ \>
%   $\Rightarrow$\\
%  \> \> $(\bowtie\; n\; S\; C [S := S + (i,j)]))$
$(\bowtie\; n\; S\; C) [S := S + (i,j)] \leadsto$
\vspace{-0.2cm}\begin{tabbing}
\hspace{0.5cm} \= \hspace{0.2cm} \= $((o_i \sqcap \exists U. (o_j\; \sqcap C [S := S + (i,j)])  \sqcap \forall S. \neg o_j)$ \hspace{0.3cm}\= $\Rightarrow$ \\
  \> \>  $(\bowtie\; (n-1)\; S\; C [S := S + (i,j)]) )$\\
  \> $\sqcap$ \> $((\neg o_i \sqcup \forall U. (\neg o_j
   \sqcup \neg C [S := S + (i,j)]) \sqcup \exists S. o_j)$ \>
   $\Rightarrow$\\
  \> \> $(\bowtie\; n\; S\; C [S := S + (i,j)]))$
\end{tabbing}
\item[23] % \begin{tabbing}
% $(\bowtie\; n\; S^-\; C) [S := S + (i,j)]$ \= $\leadsto$ \= $((o_j \sqcap \exists U. (o_i\; \sqcap C [S := S + (i,j)])
%  \sqcap \forall S^-. \neg o_i)$ \hspace{0.3cm}\= $\Rightarrow$ \\
%  \> \>  $(\bowtie\; (n-1)\; S^-\; C [S := S + (i,j)]) )$\\
%  \> $\sqcap$ \> $((\neg o_j \sqcup \forall U. (\neg o_i
%   \sqcup \neg C [S := S + (i,j)]) \sqcup \exists S^-. o_i)$ \>
%   $\Rightarrow$\\
%  \> \> $(\bowtie\; n\; S^-\; C [S := S + (i,j)]))$
% \end{tabbing}
$(\bowtie\; n\; S^-\; C) [S := S + (i,j)] \leadsto$
\vspace{-0.2cm}\begin{tabbing}
\hspace{0.5cm} \= \hspace{0.3cm}\= $((o_j \sqcap \exists U. (o_i\; \sqcap C [S := S + (i,j)])
 \sqcap \forall S^-. \neg o_i)$ \hspace{0.3cm}\= $\Rightarrow$ \\
 \> \>  $(\bowtie\; (n-1)\; S^-\; C [S := S + (i,j)]) )$\\
 \> $\sqcap$ \> $((\neg o_j \sqcup \forall U. (\neg o_i
  \sqcup \neg C [S := S + (i,j)]) \sqcup \exists S^-. o_i)$ \>
  $\Rightarrow$\\
 \> \> $(\bowtie\; n\; S^-\; C [S := S + (i,j)]))$
\end{tabbing}
\item[24] 
% \begin{tabbing}
% $(\bowtie\; n\; S\; C) [S := S - (i,j)]$ \= $\leadsto$ \= $((o_i \sqcap \exists U. (o_j\; \sqcap C [S := S - (i,j)])
%  \sqcap \exists S. o_j)$ \hspace{0.7cm}\= $\Rightarrow$ \\
%  \> \>  $(\bowtie\; (n+1)\; S\; C [S := S - (i,j)]) )$\\
%  \> $\sqcap$ \> $((\neg o_i \sqcup \forall U. (\neg o_j
%   \sqcup \neg C [S := S - (i,j)]) \sqcup \forall S. \neg o_j)$ \>
%   $\Rightarrow$\\
%  \> \> $(\bowtie\; n\; S\; C [S := S - (i,j)]))$
% \end{tabbing}
$(\bowtie\; n\; S\; C) [S := S - (i,j)] \leadsto$
\vspace{-0.2cm}\begin{tabbing}
\hspace{0.5cm} \= \hspace{0.2cm}
\= $((o_i \sqcap \exists U. (o_j\; \sqcap C [S := S - (i,j)])
 \sqcap \exists S. o_j)$ \hspace{0.8cm}\= $\Rightarrow$ \\
 \> \>  $(\bowtie\; (n+1)\; S\; C [S := S - (i,j)]) )$\\
 \> $\sqcap$ \> $((\neg o_i \sqcup \forall U. (\neg o_j
  \sqcup \neg C [S := S - (i,j)]) \sqcup \forall S. \neg o_j)$ \>
  $\Rightarrow$\\
 \> \> $(\bowtie\; n\; S\; C [S := S - (i,j)]))$
\end{tabbing}
\item[25] 
% \begin{tabbing}
% $(\bowtie\; n\; S^-\; C) [S := S - (i,j)]$ \= $\leadsto$ \=
%   $((o_j \sqcap \exists U. (o_i\; \sqcap C [S := S - (i,j)])
%  \sqcap \exists S^-. o_i)$ \hspace{0.7cm}\= $\Rightarrow$ \\
%  \> \>  $(\bowtie\; (n+1)\; S^-\; C [S := S - (i,j)]) )$\\
%  \> $\sqcap$ \> $((\neg o_j \sqcup \forall U. (\neg o_i
%   \sqcup \neg C [S := S - (i,j)]) \sqcup \forall S^-. \neg o_i)$ \>
%   $\Rightarrow$\\
%  \> \> $(\bowtie\; n\; S^-\; C [S := S - (i,j)]))$
% \end{tabbing}
$(\bowtie\; n\; S^-\; C) [S := S - (i,j)] \leadsto$
\vspace{-0.2cm}\begin{tabbing}
\hspace{0.5cm} \= \hspace{0.2cm}
\=
  $((o_j \sqcap \exists U. (o_i\; \sqcap C [S := S - (i,j)])
 \sqcap \exists S^-. o_i)$ \hspace{0.7cm}\= $\Rightarrow$ \\
 \> \>  $(\bowtie\; (n+1)\; S^-\; C [S := S - (i,j)]) )$\\
 \> $\sqcap$ \> $((\neg o_j \sqcup \forall U. (\neg o_i
  \sqcup \neg C [S := S - (i,j)]) \sqcup \forall S^-. \neg o_i)$ \>
  $\Rightarrow$\\
 \> \> $(\bowtie\; n\; S^-\; C [S := S - (i,j)]))$
\end{tabbing}
\end{description}
The concepts in the left-hand sides of the rules above are satisfied by an element, say $k$, if three
conditions are met (see \figref{fig:ex_count}). First, $k$ must be
such that $k = i$ (resp. $k = j$ if the concept uses $S^-$). Second, the
addition (resp. removal) should change something, that is the edge $(i,j)$
should not already exist (resp. should already exist). Third, the
valuation of the concept is modified if $j$ (resp. $i$ if the concept uses $S^-$) satisfies $C$ after
the substitution. If one of these conditions is not
met, one just needs to count the neighbors satisfying $C$ after the
substitution. If all are met, there is exactly one more (resp. one
less) element after the substitution.
\begin{description}
\item[26] $(\exists R .C)[c := c \pm i] \leadsto \exists R .(C[c := c \pm
  i])$
\item[27] $(\exists R^-.C)[c := c \pm i] \leadsto \exists R .(C[c := c \pm
  i])$
\item[28] $(\exists R .C)[R' := R' \pm (i,j)] \leadsto \exists R .(C[R' :=
  R' \pm (i,j)])$
\item[29] $(\exists R^-.C)[R' := R' \pm (i,j)] \leadsto \exists R .(C[R'
  := R' \pm (i,j)])$
\end{description}
Substitutions $[c := c \pm i]$ and $[R' := R' \pm (i,j)]$ do not
modify the valuation of $R$, hence the rules 26 -- 29.
\begin{description}
\item[30] $(\exists R .C)[R := R + (i,j)] \leadsto \\
 (o_i \Rightarrow \exists U. (o_j
  \sqcap C [R := R + (i,j)]) \sqcup \exists R. C [R := R + (i,j)])
  \sqcap\\
(\neg o_i \Rightarrow \exists R .(C[R := R + (i,j)]))$
\item[31] $(\exists R^-.C)[R := R + (i,j)] \leadsto \\
 (o_j \Rightarrow \exists U. (o_i
  \sqcap C [R := R + (i,j)]) \sqcup \exists R^-. C [R := R + (i,j)])
  \sqcap\\
(\neg o_j \Rightarrow \exists R^-.(C[R := R + (i,j)]))$
\item[32] \begin{tabbing}
$(\exists R .C)[R := R - (i,j)]$ \= $\leadsto$ \=
 $(o_i \Rightarrow \exists R. (C [R := R - (i,j)] \sqcap \neg o_j))$\\
\> $\sqcap$ \>
$ (\neg o_i \Rightarrow \exists R. (C [R := R - (i,j)]))$
\end{tabbing}
\item[33] \begin{tabbing}
$(\exists R^-.C)[R := R - (i,j)]$ \= $\leadsto$ \=
 $(o_j \Rightarrow \exists R^-. (C [R := R - (i,j)] \sqcap \neg
 o_i))$\\
 \> $\sqcap$ \>
$ (\neg o_j \Rightarrow \exists R^-. (C [R := R - (i,j)]))$
\end{tabbing}
\end{description}
In these cases, if the considered element, say $k$ is not $i$
(resp. not $j$ when
$R^-$ is used), one only forwards the substitution as the new edge (
$(i,j)$ )
(resp. the suppressed edge) is
not used. If $k=i$ (resp. $k=j$), then either the new edge $(i,j)$
is used and then $j$ (resp. $i$) must satisfy $C$ after the
substitution or it is not used and then the substitution is forwarded. If
one deletes the edge $(i,j)$, then one has to check that $j$
(resp. $i$) was not
the only element satisfying $C$ after the substitution, that is to say,
there is an element different from $j$ (resp. $i$)
that satisfies $C$ after the substitution.
\begin{description}
\item[34] $(\forall R .C)[c := c \pm i] \leadsto \forall R .(C[c := c \pm
  i])$
\item[35] $(\forall R^-.C)[c := c \pm i] \leadsto \forall R^-.(C[c := c \pm
  i])$
\item[36] $(\forall R .C)[R' := R' \pm (i,j)] \leadsto \forall R .(C[R' :=
  R' \pm (i,j)])$
\item[37] $(\forall R^-.C)[R' := R' \pm (i,j)] \leadsto \forall R^-.(C[R'
  := R' \pm (i,j)])$
\end{description}
Substitutions $[c := c \pm i]$ and $[R' := R' \pm (i,j)]$ do not
modify the valuation of $R$, hence the rules 34 -- 37.
\begin{description}
\item[38] $(\forall R .C)[R := R + (i,j)] \leadsto \\
(o_i \Rightarrow \forall R. (C [R := R + (i,j)]) \sqcap \exists
U. (o_j \sqcap C [R := R + (i,j)]))
  \sqcap\\
 (\neg o_i \Rightarrow \forall R. (C [R := R + (i,j)]))$ 
\item[39] $(\forall R^-.C)[R := R + (i,j)] \leadsto \\
(o_j \Rightarrow \forall R^-. (C [R := R + (i,j)]) \sqcap \exists
U. (o_i \sqcap C [R := R + (i,j)]))
  \sqcap\\
 (\neg o_j \Rightarrow \forall R^-. (C [R := R + (i,j)]))$ 
\item[40] \begin{tabbing}
$(\forall R .C)[R := R - (i,j)]$ \= $\leadsto$ \= 
 $(o_i \Rightarrow \forall R. (C [R := R - (i,j)] \sqcup o_j))$\\
\> $\sqcap$ \> $(\neg o_i \Rightarrow \forall R. (C[R := R - (i,j)]))
$
\end{tabbing}
\item[41] \begin{tabbing}
$(\forall R^-.C)[R := R - (i,j)]$\= $\leadsto$ \=
 $(o_j \Rightarrow \forall R^-. (C [R := R - (i,j)] \sqcup o_i))$\\
 \>$\sqcap$\>
$(\neg o_j \Rightarrow \forall R^-. (C[R := R - (i,j)]))
$
\end{tabbing}
\end{description}
See \figref{fig:ex_forall} for the illustration. If an element, say $k$,
satisfies the left-hand side concepts of the rules above then either $ k \neq i$
(resp. $ k \neq j$ when
$R^-$ is used), which means that the substitution does not
affect any used edge, and thus all elements reachable from $k$
through an $R$-edge (resp. $R^-$-edge) satisfy $C$ after the
substitution. On the other hand, if $k = i$
(resp. $k = j$) then, when adding the new edge $(i,j)$, $j$ (resp. $i$) and all other elements reachable
from $k$ through an $R$-edge (resp. $R^-$-edge)
satisfy $C$ after the substitution. Otherwise, that is when removing the edge $(i,j)$,
the only element reachable from $k$
through an $R$-edge (resp. an $R^-$-edge) 
possibly not satisfying $C$ after the substitution is $j$ (resp. $i$).

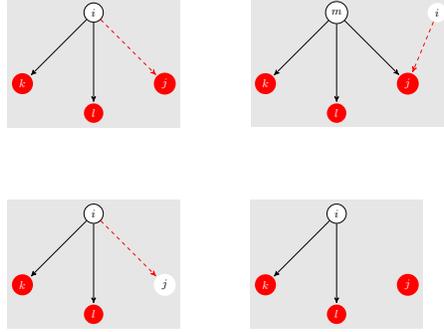
\begin{figure}
\def\smallscale{0.8}
\begin{center}
\resizebox{6cm}{!}{
\begin{tikzpicture}[->,>=stealth',shorten >=1pt,auto,node distance=2.8cm,
                    semithick,group/.style ={fill=gray!20, node distance=20mm},thickline/.style ={draw, thick, -latex'}]
  \tikzstyle{Cstate}=[circle,fill=red,draw=none,text=white]
  \tikzstyle{NCstate}=[circle,fill=white,draw=none,text=black]
  \tikzstyle{CurrState}=[circle,fill=white,draw=black,text=black]

  \node[CurrState] (I)                     {$i$};
  \node[Cstate] (K)  [below left of=I]{$k$};
  \node[Cstate] (L)  [below of=I]{$l$};
  \node[Cstate] (J)  [below right of=I] {$j$};
  \node[Cstate] (K') [right of=J]       {$k$};
  \node[CurrState] (M') [above right of=K']{$m$};
  \node[Cstate] (J') [below right of=M']{$j$};
  \node[Cstate] (L')  [below of=M']{$l$};
  \node[NCstate] (I') [right of=M']{$i$};
  \node[CurrState] (I'') [below of=L]       {$i$};
  \node[Cstate] (K'') [below left of=I'']{$k$};
  \node[Cstate] (L'')  [below of=I'']{$l$};
  \node[NCstate] (J'') [below right of=I'']{$j$};
  \node[Cstate] (K''') [right of=J'']{$k$};
  \node[CurrState] (I''') [above right of=K''']       {$i$};
  \node[Cstate] (L''')  [below of=I''']{$l$};
  \node[Cstate] (J''') [below right of=I''']{$j$};

\begin{pgfonlayer}{background}
\node [group, fit=(I) (J) (K) (L)] (b) {};
\node [group, fit=(I') (J') (K') (L') (M')] (b') {};
\node [group, fit=(I'') (J'') (K'') (L'')] (b'') {};
\node [group, fit=(I''') (J''') (K''') (L''')] (b''') {};
\end{pgfonlayer}

  \path (I)  edge node {} (L)
             edge              node {} (K)
        (M') edge node {} (K')
             edge  node {} (J')
             edge  node {} (L')
        (I'') edge node {} (K'')
             edge node {} (L'')
        (I''') edge node {} (K''')
              edge node {} (L''');
\path[color=red, dashed] (I) edge              node {} (J)
                           (I') edge              node {} (J')
                           (I'') edge node {} (J'');
\end{tikzpicture}
}
\end{center}
\caption{Example illustrating rule 24. The nodes satisfying $C[R := R
  + (i,j)]$ are drawn in red, the current node is circled in
  black. The top left graph shows the case in which the first part
  of the first implication is true, that is $(\geq 2 R.C)[R :=  R -
  (i,j)]$ is true if $(\geq 3 R. C[R:= R - (i,j)])$ is. The top right
  graph (resp. bottom left, bottom right)shows that if $o_i$ is false
  (resp. $j$ doesn't satisfy $C[R := R - (i,j)]$, $(i,j)$ is not an $R$-edge), then
  the modification does not affect the property.}\label{fig:ex_count}
\end{figure}

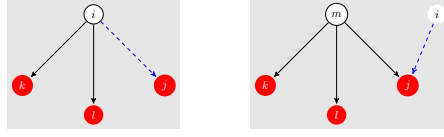
\begin{figure}
\def\smallscale{0.8}
\begin{center}
\resizebox{6cm}{!}{
\begin{tikzpicture}[->,>=stealth',shorten >=1pt,auto,node distance=2.8cm,
                    semithick,group/.style ={fill=gray!20, node distance=20mm},thickline/.style ={draw, thick, -latex'}]
  \tikzstyle{Cstate}=[circle,fill=red,draw=none,text=white]
  \tikzstyle{NCstate}=[circle,fill=white,draw=none,text=black]
  \tikzstyle{CurrState}=[circle,fill=white,draw=black,text=black]

  \node[CurrState] (I)                     {$i$};
  \node[Cstate] (K)  [below left of=I]{$k$};
  \node[Cstate] (L)  [below of=I]{$l$};
  \node[Cstate] (J)  [below right of=I] {$j$};
  \node[Cstate] (K') [right of=J]       {$k$};
  \node[CurrState] (M') [above right of=K']{$m$};
  \node[Cstate] (J') [below right of=M']{$j$};
  \node[Cstate] (L')  [below of=M']{$l$};
  \node[NCstate] (I') [right of=M']{$i$};

\begin{pgfonlayer}{background}
\node [group, fit=(I) (J) (K) (L)] (b) {};
\node [group, fit=(I') (J') (K') (L') (M')] (b') {};
\end{pgfonlayer}

  \path (I)  edge node {} (L)
             edge              node {} (K)
        (M') edge node {} (K')
             edge  node {} (J')
             edge  node {} (L');
\path[color=blue, dashed] (I) edge              node {} (J)
                           (I') edge              node {} (J');

\end{tikzpicture}
}
\end{center}
\caption{Example illustrating rule 38. The nodes satisfying $C[R := R
  + (i,j)]$ are drawn in red, the current node is circled in
  black. The leftmost graph shows the case in which $o_i$ is true, and
then $(\forall R.C)[R :=  R + (i,j)]$ will be true if $j$ satisfies $C[R := R
  + (i,j)]$. The rightmost graph shows that if $o_i$ is false, then
  the modification does not affect the property.}\label{fig:ex_forall}
\end{figure}

Now we show that, for each translation rule $L \leadsto R$, the
valuations of $L$ and $R$ are the same under a given interpretation.

\begin{lemma}\label{lemma1}
Let $\Sigma$ be a signature, $\mathcal{I}$ be an interpretation over $\Sigma$, $L \leadsto R$ be one of the
above translation rules (1 -- 41), then $L^\mathcal{I} = R^\mathcal{I}$.
\end{lemma}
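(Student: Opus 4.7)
The plan is to proceed by case analysis on the 41 rules of $\mathcal{T}$, in each case expanding $L^\mathcal{I}$ and $R^\mathcal{I}$ using the semantic clauses of \secref{sec:interp} and showing the two sets are equal. Since the rules fall into natural families, I would handle them in groups rather than one by one.

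First I would dispatch the ``trivial'' groups, namely rules 1--2, 3--4, 10--13, 18--21, 26--29, 34--37. In each of these the substitution $\theta$ either affects no symbol occurring in the concept on the left (e.g.\ a role substitution $[R':=R'\pm(i,j)]$ applied to a concept built from a role name $R\neq R'$) or affects a symbol whose valuation is irrelevant (e.g.\ $\bot$ or a nominal). In those cases the ``where $\cdot^\mathcal{I}$ is replaced by $\cdot^\mathcal{I}\cup\cdots$'' clause in the definition of $(C\theta)^\mathcal{I}$ leaves the relevant valuation unchanged, so expanding both sides yields literally the same set. The boolean rules 7--9 are handled by observing that negation, conjunction and disjunction commute with the valuation-replacement operation in the semantics of substitutions, which is immediate from the set-theoretic definitions of $(\neg C)^\mathcal{I}$, $(C\sqcap D)^\mathcal{I}$ and $(C\sqcup D)^\mathcal{I}$.

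Next I would treat the concept-name and $\exists R.\mathit{Self}$ rules (5--6, 14--17). Here the key observation is that $o_i^\mathcal{I}=\{i^\mathcal{I}\}$, so adding or removing $i$ from $c^\mathcal{I}$ corresponds exactly to taking the union with, or intersection with the complement of, $o_i^\mathcal{I}$, justifying rules 5--6. For rules 14--17, an element $k$ satisfies $\exists R.\mathit{Self}$ after inserting $(i,j)$ iff $(k,k)\in R^\mathcal{I}$ already or $k=i^\mathcal{I}=j^\mathcal{I}$, which translates into $o_i\sqcap o_j$ as a disjunct; the deletion case is analogous. The inverse-role variants reduce to the non-inverse case because self-loops are symmetric. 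These are short direct calculations.

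The main obstacle will be the counting rules 22--25 and the quantifier rules 30--33, 38--41. For the counting rules one must show that inserting or deleting a single pair $(i,j)$ changes the count $\mathrm{card}\{y\mid (x,y)\in S^\mathcal{I}\wedge y\in C^\mathcal{I}\}$ by exactly one under a precise conjunction of three conditions: $x=i^\mathcal{I}$, the edge $(i,j)$ was absent (resp.\ present) in $S^\mathcal{I}$, and $j^\mathcal{I}$ satisfies $C$ after the substitution; otherwise the count is unchanged. I would prove this by splitting on these three conditions and checking that the right-hand side of rules 22--25 encodes, through its two implications, exactly the two cases ``count shifts by one'' (using $n-1$ or $n+1$) and ``count is preserved'' (using $n$). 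It is important to verify that $C$ is evaluated after the substitution on both sides, which is why the substitution is pushed into $C$ on the right. The inverse-role variants swap the roles of $i$ and $j$ throughout. The quantifier rules 30--33 and 38--41 follow the same pattern but are slightly simpler since only existence or universality of the affected neighbour $j^\mathcal{I}$ (resp.\ $i^\mathcal{I}$) has to be tracked, as illustrated in \figref{fig:ex_count} and \figref{fig:ex_forall}; each direction of the implication is established by unfolding the semantics of $\exists$ and $\forall$ on the modified relation $R^\mathcal{I}\cup\{(i^\mathcal{I},j^\mathcal{I})\}$ or $R^\mathcal{I}\setminus\{(i^\mathcal{I},j^\mathcal{I})\}$ and doing a case split on whether $x=i^\mathcal{I}$ (resp.\ $x=j^\mathcal{I}$). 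No induction is needed since the statement is a per-rule equality; the difficulty is purely in handling the case explosion of the counting rules cleanly.
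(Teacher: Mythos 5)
Your plan is correct and follows essentially the same route as the paper's appendix proof: a per-rule case analysis that dispatches the substitution-irrelevant and boolean rules by direct unfolding, reduces rules 5--6 and 14--17 to set operations with $o_i^\mathcal{I}=\{i^\mathcal{I}\}$, and handles the counting and quantifier rules by splitting on exactly the three conditions ($x=i^\mathcal{I}$, presence/absence of the edge, and $j^\mathcal{I}$ satisfying $C$ after the substitution), with the inverse-role cases obtained by swapping $i$ and $j$. No substantive difference from the paper's argument.
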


\begin{proof}The complete proof of the lemma is given in the appendix. \end{proof}
%technical and, as a lot of rules are nearly identical, quite
%repetitive. 

Now we show that, given a $\mathcal{SROIQ}^\sigma$ concept, applying
the translation rules yields a $\mathcal{SROIQ}$ concept.

\begin{lemma}\label{lemma2}
Let $\Sigma$ be a signature, $\mathcal{T}$ is terminating.
\end{lemma}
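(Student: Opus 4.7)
The plan is to prove termination of $\mathcal{T}$ by exhibiting a well-founded measure $\mathcal{W}$ on $\mathcal{SROIQ}^\sigma$ concepts whose value decreases strictly with every rewrite step. The measure will be chosen so that every concept constructor (including substitution application) is strictly monotone in the weights of its immediate subconcepts; monotonicity under arbitrary contexts then follows, and it suffices to check $\mathcal{W}(L) > \mathcal{W}(R)$ for each of the 41 rules individually.

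I would define $\mathcal{W}$ recursively. Every base concept ($\bot$, concept name, nominal, $\exists R.\text{Self}$) gets weight $1$, and non-substitution constructors use the usual additive sum, e.g.\ $\mathcal{W}(\neg C) = 1 + \mathcal{W}(C)$, $\mathcal{W}(C \sqcap D) = \mathcal{W}(C \sqcup D) = 1 + \mathcal{W}(C) + \mathcal{W}(D)$, and likewise for $\exists R.C$, $\forall R.C$, and $(\bowtie n\,R\,C)$. The key choice is to set $\mathcal{W}(C\,\sigma) = k^{\mathcal{W}(C)+1}$ for a sufficiently large integer constant $k$. An exponential is essential: several rules duplicate $C\sigma$ on the right-hand side, so no polynomial weight can make the left-hand side dominate.

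Checking the 41 inequalities is then a routine but case-heavy calculation. Rules that fully eliminate a substitution (1--6, 10--17) compare $\mathcal{W}(L) \ge k^2$ with a constant right-hand side. Rules that push a substitution across a unary constructor (7, 18--21, 26--29, 34--37) compare $k^{\mathcal{W}(C)+2}$ with $1 + k^{\mathcal{W}(C)+1}$, a strict decrease whenever $k \ge 2$. Rules 8--9 duplicate the substitution across a binary connective and are handled by the inequality $k^{\mathcal{W}(C)+\mathcal{W}(D)+2} > 1 + k^{\mathcal{W}(C)+1} + k^{\mathcal{W}(D)+1}$, which holds already for $k \ge 3$.

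The tightest cases are rules 22--25, in which the right-hand side contains four occurrences of $C\sigma$ together with a bounded amount of extra structure. Here $\mathcal{W}(R) \le 4\,k^{\mathcal{W}(C)+1} + O(1)$ while $\mathcal{W}(L) = k\cdot k^{\mathcal{W}(C)+1}$, so any $k \ge 5$ makes the leading terms work, and $k = 6$ comfortably absorbs the additive constants coming from the surrounding $\sqcap$, $\sqcup$, $\exists U$, $\forall U$, $o_i$, $o_j$ and the outer $\bowtie$. Rules 30--33 and 38--41 contain at most three copies of $C\sigma$ on the right and follow a fortiori with the same $k$. The main obstacle of the proof is exactly this uniform choice of $k$ across the whole system; once it is pinned down, well-foundedness of $<$ on $\mathbb{N}$ lifts through $\mathcal{W}$ to yield termination of $\mathcal{T}$.
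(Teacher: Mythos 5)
Your proof is correct, but it takes a genuinely different route from the paper's. The paper establishes termination with a lexicographic pair of measures $(\mathcal{M},\mathcal{M}')$: $\mathcal{M}$ is a max-based depth in which quantifiers and each pending substitution contribute $1$ (so $\mathcal{M}(C\,\theta)=\mathcal{M}(C)+1$ and $\sqcap,\sqcup$ take a maximum), while $\mathcal{M}'$ measures how far a substitution still has to be pushed down; every rule is then checked either to decrease $\mathcal{M}$ strictly or to keep $\mathcal{M}$ constant and decrease $\mathcal{M}'$ strictly. You instead use a single interpretation into $(\mathbb{N},<)$ with an exponential weight $k^{\mathcal{W}(C)+1}$ for a pending substitution, and your arithmetic checks out: the exponential is exactly what is needed to dominate the duplication of $C\sigma$ in rules 8--9 and 22--25, and for the tightest cases 22--25 the right-hand side weighs $4\,k^{\mathcal{W}(C)+1}+O(1)$ against $k\cdot k^{\mathcal{W}(C)+1}$ on the left, so $(k-4)\,k^{2}$ absorbs the additive constant already at $k=6$ since $\mathcal{W}(C)\ge 1$. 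A concrete advantage of your choice is that $\mathcal{W}$ is \emph{strictly} monotone in every argument position, so a decrease at the redex lifts automatically to the enclosing concept and to $\mathbb{N}$'s well-foundedness; the paper's max-based measures are only weakly monotone under contexts (a strict decrease inside one conjunct need not move the max), so your interpretation gives a cleaner, self-contained closure-under-contexts step. The cost is the bookkeeping of fixing one uniform $k$ across all 41 rules, which you have identified and resolved correctly.
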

\begin{proof}
The proof is in the appendix.
\end{proof}

\begin{lemma}
Let $\Sigma$ be a signature, $\phi$ be a $\mathcal{SROIQ}^\sigma$
concept, $\psi$ be a normal form obtained from $\phi$ by rewriting the
concept $\phi$ using the rewrite rules of $\mathcal{T}$, then $\psi$
is a $\mathcal{SROIQ}$ concept. 
\end{lemma}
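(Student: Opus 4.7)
The plan is to prove the contrapositive of what is needed: any $\mathcal{SROIQ}^\sigma$ concept that still contains at least one explicit substitution admits at least one rewrite step of $\mathcal{T}$. From this, a normal form $\psi$, to which by definition no rule applies, must be substitution-free and is therefore a genuine $\mathcal{SROIQ}$ concept.

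Given such a $\psi$, suppose for contradiction that it contains a subterm of the form $D\,\sigma$ with $\sigma$ a substitution. I would pick a minimal (innermost) such occurrence, so that $D$ itself is substitution-free. Then $D$ is a plain $\mathcal{SROIQ}$ concept and its outermost constructor is one of the eleven listed in \defref{SROIQConcept}.

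I would then do a bookkeeping case analysis on the pair (head of $D$, shape of $\sigma$), matching each combination to one of the 41 rules of $\mathcal{T}$. The constants $\bot$ and $o$ are covered uniformly by rules 1 and 2; concept names by rules 3--6, splitting on whether $\sigma$ affects that very name; Boolean heads $\neg$, $\sqcap$, $\sqcup$ uniformly by rules 7--9; local reflexivity $\exists R^{(-)}.Self$ by rules 10--17, splitting on whether $\sigma$ touches $R$ and on the forward/inverse direction; counting quantifiers $(\bowtie n\; S^{(-)}\; C)$ by rules 18--25; existential quantifiers $\exists R^{(-)}.C$ by rules 26--33; and universal quantifiers $\forall R^{(-)}.C$ by rules 34--41. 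For every head the three possible shapes of $\sigma$ -- a concept substitution $[c \pm i]$, a role substitution on a role different from the head role, and a role substitution on the head role itself -- are covered exhaustively, and the inverse-role variants are handled by paired rules. Consequently at least one rule of $\mathcal{T}$ always rewrites $D\,\sigma$, contradicting the assumption that $\psi$ is normal.

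The main obstacle is really only the completeness of this tabulation: one must verify, for each of the eleven heads and each of the admissible substitution shapes, that some rule fires. The strong symmetry of the rule system (each rule involving a role comes with an inverse-role companion) makes the check routine but tedious. A minor caveat concerns the empty substitution $\epsilon$: rules 1, 2, 7, 8, 9 use a generic meta-variable $\theta$ and thus absorb $\epsilon$ at $\bot$ and nominal leaves and propagate it through Boolean connectives, but there is no rule explicitly removing $\epsilon$ at the quantifier constructors. Either one reads the generic $\theta$ of the paper as ranging over all substitutions including $\epsilon$, justified by the semantic identity $(C\,\epsilon)^\mathcal{I} = C^\mathcal{I}$, or one simply assumes, as is implicit in the presentation, that $\epsilon$ never appears explicitly in input concepts; under either reading the argument above goes through unchanged.
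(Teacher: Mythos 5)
Your argument is essentially the paper's own proof, which consists of the single observation that any remaining substitution would have the form of the left-hand side of one of the rules and hence the term would not be a normal form; you merely carry out explicitly the innermost-occurrence choice and the (head constructor, substitution shape) tabulation that the paper leaves implicit. Your caveat about the empty substitution is well taken and is the one point the paper silently glosses over: rules 3--6 and 10--41 match only non-empty substitution shapes, so terms such as $c\,\epsilon$ or $(\exists R.C)\,\epsilon$ are normal forms still carrying a substitution, and the lemma holds only under one of the two readings of $\theta$ (or of admissible inputs) that you propose.
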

\begin{proof}
The final result is substitution-free as, if there was a substitution
remaining it would have the form of the left-hand side of one of the
rules hence it wouldn't be a normal form.
\end{proof}

%\vspace{1cm}
With these three lemmata, it is easy to prove the theorem. \\
\begin{proof}
The translations allow us to obtain
$\mathcal{SROIQ}$-concepts from $\mathcal{SROIQ}^\sigma$-concepts. As
the satisfiability of an
$\mathcal{SROIQ}$-concept is known to be decidable, so is the
satisfiability of an $\mathcal{SROIQ}^\sigma$-concept.
\end{proof}

\section{Example}\label{sec:example}

In this section, we illustrate with a simple example the translation
rules given in the previous section. Consider the graph on the left of
\figref{fig:ex_ex}, one can see that element $i$ satisfies the concept $(<\; 3\; S\; (<\; 3\; S\; \top))$.
We show hereafter that $(<\; 3\; S\; (<\; 3\; S\; \top))$
still holds at element $i$ after adding the $S$-edge $(i,j)$, that is
we want to see that:\\
\begin{tabular}[h]{l|c}
$o_i$ & \mbox{At element $i$,}\\
$\sqcap\; (<\; 3\; S\; (<\; 3\; S\; \top))[S := S + (i,j)]$ &
\mbox{$(<\; 3\; S\; (<\; 3\; S\; \top))$ is satisfied}\\
 & \mbox{after substitution
  $[S := S + (i,j)]$}\\
$\sqcap\; (<\; 3\; (<\; 3\; S\; \top))$ & \mbox{$(<\; 3\; S\; (<\; 3\; S\;
  \top)))$ is satisfied}\\
$\sqcap\; \exists S. o_i$ & \mbox{there is an edge $(i,i)$}\\
$\sqcap\; \exists S. o_k$ & \mbox{there is an edge $(i,k)$}\\
$\sqcap\; \forall S. \neg o_j$ & \mbox{there is \emph{not} an edge
  $(i,j)$}\\
$\sqcap\; \exists U.(o_j \sqcap (<\;
1\; S\; \top) \sqcap \neg o_i)$ & \mbox{$j$ has no $S$-outgoing edge
  and $j \neq i$}\\
$\sqcap\; \exists U. (o_k \sqcap (<\; 1\;
S \; \top) \sqcap \neg o_i)$ & \mbox{$k$ has no $S$-outgoing edge
  and $k \neq i$}
\end{tabular}\\
 is satisfiable.

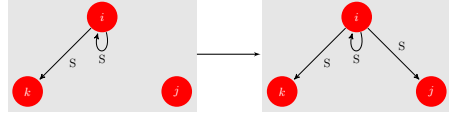
\begin{figure}
\def\smallscale{0.8}
\begin{center}
\resizebox{6cm}{!}{
\begin{tikzpicture}[->,>=stealth',shorten >=1pt,auto,node distance=2.8cm,
                    semithick,group/.style ={fill=gray!20, node distance=20mm},thickline/.style ={draw, thick, -latex'}]
  \tikzstyle{every state}=[fill=red,draw=none,text=white]

  \node[state] (I)                     {$i$};
  \node[state] (K)  [below left of=I]  {$k$};
  \node[state] (J)  [below right of=I] {$j$};
  \node[state] (K') [right of=J]       {$k$};
  \node[state] (I') [above right of=K']{$i$};
  \node[state] (J') [below right of=I']{$j$};

\begin{pgfonlayer}{background}
\node [group, fit=(I) (J) (K)] (b) {};
\node [group, fit=(I') (J') (K')] (b') {};
\end{pgfonlayer}

  \path (I)  edge [loop below] node {S} (I)
             edge              node {S} (K)
        (I') edge [loop below] node {S} (I')
             edge              node {S} (J')
             edge              node {S} (K');
\begin{scope}
\draw [thickline] (b) -- (b');
\end{scope}
\end{tikzpicture}
}
\end{center}
\caption{Example of graph transformation}\label{fig:ex_ex}
\end{figure}

By applying rule 22 to $ (<\; 3\; S\; (<\; 3\; S\; \top))[S := S +
(i,j)]$, one obtains $((o_i \sqcap \exists
 U. (o_j\; \sqcap (<\; 3\; S\; \top)[S := S + (i,j)])
   \sqcap \forall S. \neg o_j) \Rightarrow\\
  (<\; 2\; S\; (<\; 3\; S\; \top)[S := S + (i,j)]))\\
  \sqcap ((\neg o_i \sqcup \forall U. (\neg o_j
   \sqcup \neg (<\; 3\; S\; \top)[S := S + (i,j)]) \sqcup \exists S. o_j)
   \Rightarrow\\
  (<\; 3\; S\; (<\; 3\; S\; \top)[S := S + (i,j)]))$.\\
Let's deal with $ (<\; 3\; S\; \top)[S := S + (i,j)]$. By applying
rule 22 again, one gets $(o_i \sqcap \exists U. (o_j\; \sqcap \top[S := S + (i,j)])
   \sqcap \forall S. \neg o_j) \Rightarrow\\
  (<\; 2\; S\; \top [S := S + (i,j)]) )
  \sqcap ((\neg o_i \sqcup \forall U. (\neg o_j
   \sqcup \neg \top [S := S + (i,j)]) \sqcup \exists S. o_j)
   \Rightarrow\\
  (<\; 3\; S\; \top [S := S + (i,j)])$. When applying rules 1 and 7,
  it is obvious that $\top [S := S + (i,j)])$ is $\top$. \\
Then, in $\exists
 U. (o_j\; \sqcap (<\; 3\; S\; \top)[S := S + (i,j)])$, knowing that $
 \exists U.(o_j \sqcap \neg o_i)$, $\neg o_i \sqcup \forall U. (\neg o_j
   \sqcup \neg \top [S := S + (i,j)]) \sqcup \exists S. o_j$ is
   satisfied and thus $\exists
 U. (o_j\; \sqcap (<\; 3\; S\; \top)[S := S + (i,j)])$ becomes $\exists
 U. (o_j\; \sqcap (<\; 3\; S\; \top))$ which, as $\exists U.(o_j \sqcap (<\;
1\; S\; \top))$ is satisfied, is true. As $\forall U. (\neg o_j
   \sqcup \neg (<\; 3\; S\; \top)[S := S + (i,j)])$ is exactly $\neg (\exists
 U. (o_j\; \sqcap (<\; 3\; S\; \top)[S := S + (i,j)]))$, $\forall U. (\neg o_j
   \sqcup \neg (<\; 3\; S\; \top)[S := S + (i,j)])$ is false.\\
Thus, going back to $((o_i \sqcap \exists
 U. (o_j\; \sqcap (<\; 3\; S\; \top)[S := S + (i,j)])
   \sqcap \forall S. \neg o_j) \Rightarrow\\
  (<\; 2\; S\; (<\; 3\; S\; \top)[S := S + (i,j)]))\\
  \sqcap ((\neg o_i \sqcup \forall U. (\neg o_j
   \sqcup \neg (<\; 3\; S\; \top)[S := S + (i,j)]) \sqcup \exists S. o_j)
   \Rightarrow\\
  (<\; 3\; S\; (<\; 3\; S\; \top)[S := S + (i,j)]))$, what one gets is $((o_i \sqcap \top
   \sqcap \forall S. \neg o_j) \Rightarrow\\
  (<\; 2\; S\; (<\; 3\; S\; \top)[S := S + (i,j)]))
  \sqcap ((\neg o_i \sqcup \bot \sqcup \exists S. o_j)
   \Rightarrow\\
  (<\; 3\; S\; (<\; 3\; S\; \top)[S := S + (i,j)]))$.\\
As we are at $i$ and, from $\forall S. \neg o_j$, there is no $S$-edge
$(i,j)$, what we have to prove is $(<\; 2\; S\; (<\; 3\; S\; \top)[S
:= S + (i,j)]$, that is that $i$ has less than 2 $S$-neighbors such
that they have less than 3 $S$-neighbors after the substitution. We
will test all of these neighbors. From $(<\; 3\; S\; \top)$, we know
that there are at most 2 neighbors and, from $\exists S. o_i$ and
$\exists S. o_k$, we know that these neighbors are $i$ and $k$. We
have to prove that one of them doesn't satisfy $(<\; 3\; S\; \top)[S := S + (i,j)]$.\\
Let's prove that $i$ doesn't satisfy $(<\; 3\; S\; \top)[S := S + (i,j)]$. As $o_i \sqcap \exists U. (o_j\; \sqcap \top[S := S + (i,j)]) \sqcap \forall S. \neg o_j$ is satisfied at $i$ (we know that $i$
   satisfies $o_i$, that there is an element $j$ (thus $\exists
   U. (o_j\; \sqcap \top)$ is true) and that there is no $S$-edge
   $(i,j)$),  $(<\; 3\; S\; \top)[S := S + (i,j)]$ is satisfied at $i$
   iff $ (<\; 2\; S\; \top [S := S + (i,j)])$ is
   satisfied at $i$. But, as $\top [S := S + (i,j)] \leadsto \top$, $
   (<\; 2\; S\; \top [S := S + (i,j)])$ is not satisfied at $i$ (as
   $i$ has 2 different neighbors). Thus, $(<\; 2\; S\; (<\; 3\; S\; \top)[S
:= S + (i,j)]$ is satisfied at $i$ which means that the initial
concept is satisfiable. 

\section{Conclusion}\label{sec:conclusion}

We have introduced a new Description Logic named
$\mathcal{SROIQ^\sigma}$ which is an extension of
$\mathcal{SROIQ}$ with substitutions. We have proven that this logic is decidable by translating it to $\mathcal{SROIQ}$.

The same method can likely be used for several other DLs. For
instance, $\mathcal{SHOIQ}$ \cite{HorrSattSHOIQ2005} is a restriction of $\mathcal{SROIQ}$
without role assertions and $\exists R.Self$. As none of the
translation rules outside these constructs uses these constructs, they
can be removed and $\mathcal{SHOIQ}^\sigma$ can be defined in the
exact same way as $\mathcal{SROIQ}^\sigma$.

On the other hand, Description Logics like $\mathcal{ALC}$ \cite{Schmidt-SchaubB:1991:ACD:114341.114342} that is
restricted to concepts without counting quantifiers are not suited for
such a translation. Existential and universal quantification would
create couting quantifiers during translation that are outside the
scope of the logic. Even extending $\mathcal{ALC}$ with counting
quantifiers, which produces $\mathcal{ALCQ}$, is not enough as
nominals are used during translation.

% \bibliographystyle{plain}
% \bibliography{SROIQsigmaIsDecidable-ShortVersion}

\newpage
\section*{Appendix}

First we prove \lemmaref{lemma1}.

\begin{framed}
Let $\Sigma$ be a signature, $\mathcal{I}$ be an interpretation over $\Sigma$, $L \leadsto R$ be one of the
above translation rules (1 -- 41), then $L^\mathcal{I} =
R^\mathcal{I}$.
\end{framed}

\begin{proof}
\begin{enumerate}
\item $(\bot\; \theta)^\mathcal{I} = \emptyset = \bot^\mathcal{I}$ as
  it does not depend on the valuation of any concept or role.
\item As the valuation of a nominal is independent of the valuations
  of all roles and concepts, $(o\; \theta)^\mathcal{I} =
  o^\mathcal{I}$.
\item $(c [R := R \pm (i,j)])^\mathcal{I} =
  c^\mathcal{I}$ as $c
  ^\mathcal{I}$ does not depend on the valuation of $R$.
\item $(c [c' := c' \pm i])^\mathcal{I} =
  c^\mathcal{I}$ as $c^\mathcal{I}$ does not depend on the valuation of $c'$.
\item $(c [c := c + i])^\mathcal{I} = (c \sqcup
  o_i)^\mathcal{I}$. By definition of the valuation of a substitution,
  $(c [c := c + i])^\mathcal{I} = c^\mathcal{I} \cup i = (c \sqcup
  o_i)^\mathcal{I}$.
\item $(c [c := c - i])^\mathcal{I} = (c \sqcap \neg
  o_i)^\mathcal{I}$. By definition of the valuation of a substitution,
  $(c [c := c - i])^\mathcal{I} = c^\mathcal{I} \cap \overline{i} = (c \sqcap
 \neg o_i)^\mathcal{I}$.
\item By definition, $((\neg C) [c := c + i])^\mathcal{I} = (\neg
  C)^\mathcal{I}$ where $c^\mathcal{I}$ is replaced by $c^\mathcal{I}
  \cup i$. As $(\neg C)^\mathcal{I} = \Delta \cap
  \overline{C^\mathcal{I}}$, $((\neg C) [c := c + i])^\mathcal{I} =
  \Delta \cap \overline{C'^{\mathcal{I}}}$ with $C'^{\mathcal{I}} =
  C^\mathcal{I}$ where $c^\mathcal{I}$ is replaced by $c^\mathcal{I}
  \cup i$. This is exactly the definition of $(\neg (C [c := c +
  i]))^\mathcal{I}$. The same can be done with the other substitutions.
\item By definition, $((C \sqcup D) [c := c + i])^\mathcal{I} = (C \sqcup
  D)^\mathcal{I}$ where $c^\mathcal{I}$ is replaced by $c^\mathcal{I}
  \cup i$. That is $(C' \sqcup D')^\mathcal{I}$ with $C'^{\mathcal{I}} =
  C^\mathcal{I}$ where $c^\mathcal{I}$ is replaced by $c^\mathcal{I}
  \cup i$ and $D'^{\mathcal{I}} =
  D^\mathcal{I}$ where $c^\mathcal{I}$ is replaced by $c^\mathcal{I}
  \cup i$. This is exactly the definition of $(C [c := c +
  i] \sqcup D [c := c + i])^\mathcal{I}$. The same can be done with the other substitutions.
\item By definition, $((C \sqcap D) [c := c + i])^\mathcal{I} = (C \sqcap
  D)^\mathcal{I}$ where $c^\mathcal{I}$ is replaced by $c^\mathcal{I}
  \cup i$. That is $(C' \sqcap D')^\mathcal{I}$ with $C'^{\mathcal{I}} =
  C^\mathcal{I}$ where $c^\mathcal{I}$ is replaced by $c^\mathcal{I}
  \cup i$ and $D'^{\mathcal{I}} =
  D^\mathcal{I}$ where $c^\mathcal{I}$ is replaced by $c^\mathcal{I}
  \cup i$. This is exactly the definition of $(C [c := c +
  i] \sqcap D [c := c + i])^\mathcal{I}$. The same can be done with the other substitutions.
\item $(\exists R.Self [c := c \pm i])^\mathcal{I} = (\exists
  R.Self)^\mathcal{I}$ as $(\exists
  R.Self)^\mathcal{I}$ is independent of the valuation of $c$.
\item $(\exists R^-.Self [c := c \pm i])^\mathcal{I} = (\exists
  R^-.Self)^\mathcal{I}$ as $(\exists
  R^-.Self)^\mathcal{I}$ is independent of the valuation of $c$.
\item $(\exists R.Self [R' := R' \pm (i,j)])^\mathcal{I}= (\exists
  R.Self)^\mathcal{I}$ as $(\exists
  R.Self)^\mathcal{I}$ is independent of the valuation of $R'$.
\item $(\exists R^-.Self [R' := R' \pm (i,j)])^\mathcal{I}= (\exists
  R^-.Self)^\mathcal{I}$ as $(\exists
  R^-.Self)^\mathcal{I}$ is independent of the valuation of $R'$.
\item As {$(\exists R.Self)^\mathcal{I} = \{x | (x,x) \in
  R^\mathcal{I}\}$}, ${(\exists R.Self [R := R + (i,j)])^\mathcal{I}} =$\\ ${\{x | (x,x) \in
  R^\mathcal{I} \cup \{(i,j)\}\}} = (\exists R.Self)^\mathcal{I} \cup
  (\{i\} \cap \{j\})$ that is \\$(\exists R.Self [R := R + (i,j)])^\mathcal{I} = ((o_i \sqcap
  o_j) \sqcup \exists R.Self)^\mathcal{I}$.
\item As {$(\exists R^-.Self)^\mathcal{I} = \{x | (x,x) \in
  R^\mathcal{I}\}$}, ${(\exists R^-.Self [R := R + (i,j)])^\mathcal{I}} =$\\ ${\{x | (x,x) \in
  R^\mathcal{I} \cup \{(i,j)\}\}} = (\exists R^-.Self)^\mathcal{I} \cup
  (\{i\} \cap \{j\})$ that is \\$(\exists R^-.Self [R := R + (i,j)])^\mathcal{I} = ((o_i \sqcap
  o_j) \sqcup \exists R^-.Self)^\mathcal{I}$.
\item As $(\exists R.Self)^\mathcal{I} = \{x | (x,x) \in
  R^\mathcal{I}\}$, $(\exists R.Self [R := R - (i,j)])^\mathcal{I} =$\\ $\{x | (x,x) \in
  R^\mathcal{I} \cap \overline{\{(i,j)\}}\} = (\exists R.Self)^\mathcal{I} \cap
  (\overline{\{i\}} \cup \overline{\{j\}})$ that is \\$(\exists R.Self [R := R - (i,j)])^\mathcal{I} = ((\neg o_i \sqcup
 \neg o_j) \sqcap \exists R.Self)^\mathcal{I}$.
\item As $(\exists R^-.Self)^\mathcal{I} = \{x | (x,x) \in
  R^\mathcal{I}\}$, $(\exists R^-.Self [R := R - (i,j)])^\mathcal{I} =$\\ $\{x | (x,x) \in
  R^\mathcal{I} \cap \overline{\{(i,j)\}}\} = (\exists R^-.Self)^\mathcal{I} \cap
  (\overline{\{i\}} \cup \overline{\{j\}})$ that is \\$(\exists R^-.Self [R := R - (i,j)])^\mathcal{I} = ((\neg o_i \sqcup
 \neg o_j) \sqcap \exists R^-.Self)^\mathcal{I}$.
\item As $(\bowtie\; n\; S\; C)^\mathcal{I} = \{x\; |\; card\{y\; |\;
  (x,y) \in S^\mathcal{I} \wedge y \in C^\mathcal{I}\} \bowtie n\}$,\\
  $((\bowtie\; n\; S\; C) [c := c \pm i])^\mathcal{I}$ =\\$\{x\; |\; card\{y\; |\;
  (x,y) \in S^\mathcal{I} \wedge y \in C[c := c \pm i]^\mathcal{I}\}$
  $\bowtie n\}$ as $S^\mathcal{I}$ is independent of the valuation of
  $c$. Thus $((\bowtie\; n\; S\; C) [c := c \pm i])^\mathcal{I} =
  (\bowtie\; n\; S\; C [c := c \pm i])^\mathcal{I}$.
\item As $(\bowtie\; n\; S^-\; C)^\mathcal{I} = \{x\; |\; card\{y\; |\;
  (y,x) \in S^\mathcal{I} \wedge y \in C^\mathcal{I}\} \bowtie n\}$,\\
  $((\bowtie\; n\; S^-\; C) [c := c \pm i])^\mathcal{I} =$\\ $
\{x\; |\; card\{y\; |\;
  (y,x) \in S^\mathcal{I} \wedge y \in C[c := c \pm i]^\mathcal{I}\}
  \bowtie n\}$ as $S^\mathcal{I}$ is independent of the valuation of
  $c$. Thus $((\bowtie\; n\; S^-\; C) [c := c \pm i])^\mathcal{I} =
  (\bowtie\; n\; S^-\; C [c := c \pm i])^\mathcal{I}$.
\item As $(\bowtie\; n\; S\; C)^\mathcal{I} = \{x\; |\; card\{y\; |\;
  (x,y) \in S^\mathcal{I} \wedge y \in C^\mathcal{I}\} \bowtie n\}$,\\
  $((\bowtie\; n\; S\; C) [R' := R' \pm (i,j)])^\mathcal{I} =$\\ $\{x\; |\; card\{y\; |\;
  (x,y) \in S^\mathcal{I} \wedge y \in C[R' := R' \pm (i,j)]^\mathcal{I}\}
  \bowtie n\}$ as $S^\mathcal{I}$ is independent of the valuation of
  $R'$. Thus\\ $((\bowtie\; n\; S\; C) [R' := R' \pm (i,j)])^\mathcal{I} =
 (\bowtie\; n\; S\; C [R' := R' \pm (i,j)])^\mathcal{I}$.
\item As $(\bowtie\; n\; S^-\; C)^\mathcal{I} = \{x\; |\; card\{y\; |\;
  (y,x) \in S^\mathcal{I} \wedge y \in C^\mathcal{I}\} \bowtie n\}$,\\
  $((\bowtie\; n\; S^-\; C) [R' := R' \pm (i,j)])^\mathcal{I} =$\\ $\{x\; |\; card\{y\; |\;
  (y,x) \in S^\mathcal{I} \wedge y \in C[R' := R' \pm (i,j)]^\mathcal{I}\}
  \bowtie n\}$ as $S^\mathcal{I}$ is independent of the valuation of
  $R'$. Thus\\ $((\bowtie\; n\; S^-\; C) [R' := R' \pm (i,j)])^\mathcal{I} =
 (\bowtie\; n\; S^-\; C [R' := R' \pm (i,j)])^\mathcal{I}$.
\item As $(\bowtie\; n\; S\; C)^\mathcal{I} = \{x\; |\; card\{y\; |\;
  (x,y) \in S^\mathcal{I} \wedge y \in C^\mathcal{I}\} \bowtie n\}$,\\
  $((\bowtie\; n\; S\; C)[S := S + (i,j)])^\mathcal{I} =$\\ $\{x\; |\; card\{y\; |\;
  (x,y) \in S [S := S + (i,j)]^\mathcal{I} \wedge y \in C [S := S +
  (i,j)]^\mathcal{I}\} \bowtie n\} =$\\ $\{x\; |\; card\{y\; |\;
  (x,y) \in S^\mathcal{I} \cup \{(i,j)\} \wedge y \in C [S := S +
  (i,j)]^\mathcal{I}\} \bowtie n\}$. We consider $\{y\; |\;
  (x,y) \in S^\mathcal{I} \cup \{(i,j)\} \wedge y \in C [S := S +
  (i,j)]^\mathcal{I}\}$ and try the possible sets:
\begin{itemize}
\item If $x \neq i$ or $j \not\in C [S := S +
  (i,j)]^\mathcal{I}$ or $(i,j) \in S^\mathcal{I}$, then \\$\{y\; |\;
  (x,y) \in S^\mathcal{I} \cup \{(i,j)\} \wedge y \in C [S := S +
  (i,j)]^\mathcal{I}\} =$\\ $\{y\; |\;
  (x,y) \in S^\mathcal{I} \wedge y \in C [S := S +
  (i,j)]^\mathcal{I}\}$,
\item else, $x = i$ and $j \in C [S := S +
  (i,j)]^\mathcal{I}$ and $(i,j) \not\in S^\mathcal{I}$, and thus $\{y\; |\;
  (x,y) \in S^\mathcal{I} \cup \{(i,j)\} \wedge y \in C [S := S +
  (i,j)]^\mathcal{I}\} =$\\ $\{y\; |\;
  (x,y) \in S^\mathcal{I} \wedge y \in C [S := S +
  (i,j)]^\mathcal{I}\} \cup \{(i,j)\}$. As $\{(i,j)\}$ is disjoint
  from $ \{y\; |\;
  (x,y) \in S^\mathcal{I} \wedge y \in C [S := S +
  (i,j)]^\mathcal{I}\}$, the cardinality of \\$\{y\; |\;
  (x,y) \in S^\mathcal{I} \cup \{(i,j)\} \wedge y \in C [S := S +
  (i,j)]^\mathcal{I}\}$ is exactly the cardinality of $\{y\; |\;
  (x,y) \in S^\mathcal{I} \wedge y \in C [S := S +
  (i,j)]^\mathcal{I}\} + 1$.
\item Thus, \\${\{x\; |\; card\{y\; |\;
  (x,y) \in S [S := S + (i,j)]^\mathcal{I} \wedge y \in C [S := S +
  (i,j)]^\mathcal{I}\} \bowtie n\}}$\\ $= \{x\; | $\\$(x \neq i \vee j \not\in C [S := S +
  (i,j)]^\mathcal{I} \vee (i,j) \in S^\mathcal{I} \Rightarrow$\\ $\; card\{y\; |\;
  (x,y) \in S^\mathcal{I} \wedge y \in C [S := S +
  (i,j)]^\mathcal{I}\} \bowtie n)$\\ $\wedge (x = i \wedge j \in C [S := S +
  (i,j)]^\mathcal{I} \wedge (i,j) \not\in S^\mathcal{I} \Rightarrow$\\ $card\{y\; |\;
  (x,y) \in S^\mathcal{I} \wedge y \in C [S := S +
  (i,j)]^\mathcal{I}\} \bowtie (n - 1))\}$
\end{itemize}
Thus,
$((\bowtie\; n\; S\; C) [S := S + (i,j)])^\mathcal{I} =\\
  (((o_i \sqcap \exists U. (o_j\; \sqcap C [S := S + (i,j)])
  \sqcap \forall S. \neg o_j) \Rightarrow \\
 (\bowtie\; (n-1)\; S\; C [S := S + (i,j)]) )\\
 \sqcap ((\neg o_i \sqcup \forall U. (\neg o_j
  \sqcup \neg C [S := S + (i,j)]) \sqcup \exists S. o_j)
  \Rightarrow\\
 (\bowtie\; n\; S\; C [S := S + (i,j)])))^\mathcal{I}$
\item One can see that $(\bowtie\; n\; S^-\; C)[S := S + (i,j)]$ is
  similar to\\ $(\bowtie\; n\; S^-\; C)[S^- := S^- + (j,i)]$, that is
  replacing $S$ by $S^-$ and swapping $i$ and $j$ in the previous case.
\item As $(\bowtie\; n\; S\; C)^\mathcal{I} = \{x\; |\; card\{y\; |\;
  (x,y) \in S^\mathcal{I} \wedge y \in C^\mathcal{I}\} \bowtie n\}$,\\
  $((\bowtie\; n\; S\; C)[S := S - (i,j)])^\mathcal{I} =\\ \{x\; |\; card\{y\; |\;
  (x,y) \in S [S := S - (i,j)]^\mathcal{I} \wedge y \in C [S := S -
  (i,j)]^\mathcal{I}\} \bowtie n\} =\\ \{x\; |\; card\{y\; |\;
  (x,y) \in S^\mathcal{I} \cap \overline{\{(i,j)\}} \wedge y \in C [S := S -
  (i,j)]^\mathcal{I}\} \bowtie n\}$. We consider $\{y\; |\;
  (x,y) \in S^\mathcal{I} \cap \overline{\{(i,j)}\} \wedge y \in C [S := S +
  (i,j)]^\mathcal{I}\}$ and try the possible sets:
\begin{itemize}
\item If $x \neq i$ or $j \not\in C [S := S -
  (i,j)]^\mathcal{I}$ or $(i,j) \not\in S^\mathcal{I}$, then\\ $\{y\; |\;
  (x,y) \in S^\mathcal{I} \cap \overline{\{(i,j)\}} \wedge y \in C [S := S -
  (i,j)]^\mathcal{I}\} =\\ \{y\; |\;
  (x,y) \in S^\mathcal{I} \wedge y \in C [S := S -
  (i,j)]^\mathcal{I}\}$,
\item else, $x = i$ and $j \in C [S := S -
  (i,j)]^\mathcal{I}$ and $(i,j) \in S^\mathcal{I}$, and thus $\{y\; |\;
  (x,y) \in S^\mathcal{I} \cap \overline{\{(i,j)\}} \wedge y \in C [S := S -
  (i,j)]^\mathcal{I}\} =\\ \{y\; |\;
  (x,y) \in S^\mathcal{I} \wedge y \in C [S := S -
  (i,j)]^\mathcal{I}\} \cap \overline{\{i,j\}}$. As $\{(i,j)\}
  \subseteq \\\{y\; |\;
  (x,y) \in S^\mathcal{I} \wedge y \in C [S := S -
  (i,j)]^\mathcal{I}\}$, the cardinality of \\$\{y\; |\;
  (x,y) \in S^\mathcal{I} \cap \overline{\{(i,j)}\} \wedge y \in C [S := S +
  (i,j)]^\mathcal{I}\}$ is exactly the cardinality of $\{y\; |\;
  (x,y) \in S^\mathcal{I} \wedge y \in C [S := S -
  (i,j)]^\mathcal{I}\} - 1$.
\item Thus, \\${\{x\; |\; card\{y\; |\;
  (x,y) \in S [S := S - (i,j)]^\mathcal{I} \wedge y \in C [S := S -
  (i,j)]^\mathcal{I}\} \bowtie n\}}\\ = \{x\; |\\ (x \neq i \vee j \not\in C [S := S -
  (i,j)]^\mathcal{I} \vee (i,j) \not\in S^\mathcal{I} \Rightarrow\\  card\{y\; |\;
  (x,y) \in S^\mathcal{I} \wedge y \in C [S := S -
  (i,j)]^\mathcal{I}\} \bowtie n)\\ \wedge (x = i \wedge j \in C [S := S -
  (i,j)]^\mathcal{I} \wedge (i,j) \in S^\mathcal{I} \Rightarrow\\  card\{y\; |\;
  (x,y) \in S^\mathcal{I} \wedge y \in C [S := S -
  (i,j)]^\mathcal{I}\} \bowtie (n + 1))\}$
\end{itemize}
Thus, $((\bowtie\; n\; S\; C) [S := S - (i,j)])^\mathcal{I} =\\
  (((o_i \sqcap \exists U. (o_j\; \sqcap C [S := S - (i,j)])
  \sqcap \exists S. o_j) \Rightarrow \\
 (\bowtie\; (n+1)\; S\; C [S := S - (i,j)]))\\
 \sqcap ((\neg o_i \sqcup \forall U. (\neg o_j
  \sqcup \neg C [S := S - (i,j)]) \sqcup \forall S. \neg o_j)
  \Rightarrow\\
 (\bowtie\; n\; S\; C [S := S - (i,j)])))^\mathcal{I}$
\item One can see that $(\bowtie\; n\; S^-\; C)[S := S - (i,j)]$ is
  similar to\\ $(\bowtie\; n\; S^-\; C)[S^- := S^- - (j,i)]$, that is
  replacing $S$ by $S^-$ and swapping $i$ and $j$ in the previous case.
 \item As $(\exists R. C)^\mathcal{I} = \{x\; |\; \exists y. (x,y) \in
   R^\mathcal{I} \wedge y \in C^\mathcal{I}\}$,\\
   $((\exists R. C) [c := c \pm i])^\mathcal{I} = \{x\; |\; \exists y. (x,y) \in
   R^\mathcal{I} \wedge y \in C[c := c \pm i]^\mathcal{I}\}$ as $R^\mathcal{I}$ is independent of the valuation of
   $c$. Thus \\$((\exists R. C) [c := c \pm i])^\mathcal{I} =
   (\exists R. (C [c := c \pm i]))^\mathcal{I}$.
 \item As $(\exists R^-. C)^\mathcal{I} = \{x\; |\; \exists y. (y,x) \in
   R^\mathcal{I} \wedge y \in C^\mathcal{I}\}$,\\
   $((\exists R^-. C) [c := c \pm i])^\mathcal{I} = \{x\; |\; \exists y. (y,x) \in
   R^\mathcal{I} \wedge y \in C[c := c \pm i]^\mathcal{I}\}$ as $R^\mathcal{I}$ is independent of the valuation of
   $c$. Thus \\$((\exists R^-. C) [c := c \pm i])^\mathcal{I} =
   (\exists R^-. (C [c := c \pm i]))^\mathcal{I}$.
 \item As $(\exists R. C)^\mathcal{I} = \{x\; |\; \exists y. (x,y) \in
   R^\mathcal{I} \wedge y \in C^\mathcal{I}\}$,\\
   $((\exists R. C) [R' := R' \pm (i,j)])^\mathcal{I} = \{x\; |\; \exists y. (x,y) \in
   R^\mathcal{I} \wedge\\ y \in C[R' := R' \pm (i,j)]^\mathcal{I}\}$ as $R^\mathcal{I}$ is independent of the valuation of
   $R'$. Thus $((\exists R .C)[R' := R' \pm (i,j)])^\mathcal{I}=
   (\exists R .(C[R' := R' \pm (i,j)]))^\mathcal{I}$
 \item As $(\exists R^-. C)^\mathcal{I} = \{x\; |\; \exists y. (y,x) \in
   R^\mathcal{I} \wedge y \in C^\mathcal{I}\}$,\\
   $((\exists R^-. C) [R' := R' \pm (i,j)])^\mathcal{I} = \{x\; |\; \exists y. (y,x) \in
   R^\mathcal{I} \wedge\\ y \in C[R' := R' \pm (i,j)]^\mathcal{I}\}$ as $R^\mathcal{I}$ is independent of the valuation of
   $R'$. Thus $((\exists R^-.C)[R' := R' \pm (i,j)])^\mathcal{I}= (\exists R^-.(C[R' := R' \pm (i,j)]))^\mathcal{I}$
 \item As $(\exists R. C)^\mathcal{I} = \{x\; |\; \exists y. (x,y) \in
   R^\mathcal{I} \wedge y \in C^\mathcal{I}\}$,\\
 $((\exists R. C)[R := R + (i,j)])^\mathcal{I} = \{x\; |\; \exists y. (x,y) \in
   R[R := R + (i,j)]^\mathcal{I} \wedge\\ y \in C[R := R +
   (i,j)]^\mathcal{I}\}$ that is $((\exists R. C)[R := R + (i,j)])^\mathcal{I} =\\ \{x\; |\; \exists y. (x,y) \in
   R^\mathcal{I}\cup\{(i,j)\} \wedge y \in C[R := R +
   (i,j)]^\mathcal{I}\} =\\ \{x\; |\; (\exists y. (x,y) \in
   R^\mathcal{I} \wedge y \in C[R := R +
   (i,j)]^\mathcal{I}) \vee (x = i \wedge\\ j \in C[R := R + (i,j)])\} =
   \{x\; |\; (x = i \Rightarrow \exists y. (x,y) \in
   R^\mathcal{I} \wedge \\y \in C[R := R +
   (i,j)]^\mathcal{I} \vee j \in C[R := R + (i,j)]) \wedge (x \neq i
   \Rightarrow \exists y. (x,y) \in
   R^\mathcal{I} \wedge\\ y \in C[R := R +
   (i,j)]^\mathcal{I})\}$. Moreover, $((o_i \Rightarrow \exists R. (C[R
   := R + (i,j)]) \sqcup\\ \exists U.(o_j \sqcap C[R := R + (i,j)]))
   \sqcap (\neg o_i \Rightarrow \exists R. (C[R := R + (i,j)])))^\mathcal{I} =\\ \{x\; |\; (x = i \Rightarrow \exists y. (x,y) \in
   R^\mathcal{I} \wedge y \in C[R := R +
   (i,j)]^\mathcal{I} \vee \\j \in C[R := R + (i,j)]) \wedge (x \neq i
   \Rightarrow \exists y. (x,y) \in
   R^\mathcal{I} \wedge y \in C[R := R +
   (i,j)]^\mathcal{I})\}$. Thus
 $((\exists R .C)[R := R + (i,j)])^\mathcal{I} =\\
   ((o_i \Rightarrow \exists R. (C[R
   := R + (i,j)]) \sqcup \exists U.(o_j \sqcap C[R := R + (i,j)]))
   \sqcap\\ (\neg o_i \Rightarrow \exists R. (C[R := R + (i,j)])))^\mathcal{I}$
\item One can see that $(\exists S^-. C)[S := S + (i,j)]$ is
  similar to\\ $(\exists S^-. C)[S^- := S^- + (j,i)]$, that is
  replacing $S$ by $S^-$ and swapping $i$ and $j$ in the previous case.
 \item As $(\exists R. C)^\mathcal{I} = \{x\; |\; \exists y. (x,y) \in
   R^\mathcal{I} \wedge y \in C^\mathcal{I}\}$,\\
 $((\exists R .C)[R := R - (i,j)])^\mathcal{I} = \{x\; |\; \exists y. (x,y) \in
   R[R := R - (i,j)]^\mathcal{I} \wedge\\ y \in C[R := R -
   (i,j)]^\mathcal{I}\}$ that is $((\exists R. C)[R := R - (i,j)])^\mathcal{I} =\\ \{x\; |\; \exists y. (x,y) \in
   R^\mathcal{I}\cap \overline{\{(i,j)\}} \wedge y \in C[R := R -
   (i,j)]^\mathcal{I}\} =\\ \{x\; |\; \exists y. (x \neq i \vee y \neq j)
   \wedge ((x,y) \in
   R^\mathcal{I} \wedge y \in C[R := R -
   (i,j)]^\mathcal{I})\} = \{x\; |\; (x = i \Rightarrow \exists y. (x,y) \in
   R^\mathcal{I} \wedge y \in C[R := R -
   (i,j)]^\mathcal{I} \wedge y \neq j) \wedge\\ (x \neq i \Rightarrow \exists y. (x,y) \in
   R^\mathcal{I} \wedge y \in C[R := R -
   (i,j)]^\mathcal{I})\}$. Moreover,\\ $((o_i \Rightarrow \exists R. (C[R
   := R -(i,j)] \sqcup \neg o_j)) \sqcap (\neg o_i
   \Rightarrow \exists R .(C[R := R -
   (i,j)])))^\mathcal{I} = \{x\; |\; (x = i \Rightarrow \exists y. (x,y) \in
   R^\mathcal{I} \wedge y \in C[R := R -
   (i,j)]^\mathcal{I} \wedge y \neq j) \wedge\\ (x \neq i \Rightarrow \exists y. (x,y) \in
   R^\mathcal{I} \wedge y \in C[R := R -
   (i,j)]^\mathcal{I})\}$. Thus\\
 $((\exists R .C)[R := R - (i,j)])^\mathcal{I} = \\
  ((o_i \Rightarrow \exists R. (C[R
   := R -(i,j)] \sqcup \neg o_j)) \sqcap (\neg o_i
   \Rightarrow \exists R .(C[R := R -
   (i,j)])))^\mathcal{I}$
\item One can see that $(\exists S^-. C)[S := S - (i,j)]$ is
  similar to\\ $(\exists S^-. C)[S^- := S^- - (j,i)]$, that is
  replacing $S$ by $S^-$ and swapping $i$ and $j$ in the previous case.
 \item As $(\forall R. C)^\mathcal{I} = \{x\; |\; \forall y. (x,y) \in
   R^\mathcal{I} \Rightarrow y \in C^\mathcal{I}\}$, $((\forall R .C)[c
   := c \pm i])^\mathcal{I} = \{x\; |\; \forall y. (x,y) \in
   R^\mathcal{I} \Rightarrow y \in C[c := c \pm i]^\mathcal{I}\}$ as
   $R^\mathcal{I}$ is independent of the valuation of $c$. Thus,
   $((\forall R .C)[c := c \pm i])^\mathcal{I} = (\forall R .(C[c := c \pm
   i]))^\mathcal{I}$
 \item As $(\forall R^-. C)^\mathcal{I} = \{x\; |\; \forall y. (y,x) \in
   R^\mathcal{I} \Rightarrow y \in C^\mathcal{I}\}$, $((\forall R^-.C)[c
   := c \pm i])^\mathcal{I} = \{x\; |\; \forall y. (y,x) \in
   R^\mathcal{I} \Rightarrow y \in C[c := c \pm i]^\mathcal{I}\}$ as
   $R^\mathcal{I}$ is independent of the valuation of $c$. Thus,
   $((\forall R^-.C)[c := c \pm i])^\mathcal{I} = (\forall R^-.(C[c := c \pm
   i]))^\mathcal{I}$
 \item As $(\forall R. C)^\mathcal{I} = \{x\; |\; \forall y. (x,y) \in
   R^\mathcal{I} \Rightarrow y \in C^\mathcal{I}\}$, $((\forall R .C)[R'
   := R' \pm (i,j)])^\mathcal{I} = \{x\; |\; \forall y. (x,y) \in
   R^\mathcal{I} \Rightarrow y \in C[R' := R' \pm (i,j)]^\mathcal{I}\}$ as
   $R^\mathcal{I}$ is independent of the valuation of
   $R'$. Thus,$((\forall R .C)[R' := R' \pm (i,j)])^\mathcal{I} =
   (\forall R .(C[R' := R' \pm (i,j)]))^\mathcal{I}$
 \item As $(\forall R^-. C)^\mathcal{I} = \{x\; |\; \forall y. (y,x) \in
   R^\mathcal{I} \Rightarrow y \in C^\mathcal{I}\}$,\\ $((\forall R^-.C)[R'
   := R' \pm (i,j)])^\mathcal{I} = \{x\; |\; \forall y. (y,x) \in
   R^\mathcal{I} \Rightarrow\\ y \in C[R' := R' \pm (i,j)]^\mathcal{I}\}$ as
   $R^\mathcal{I}$ is independent of the valuation of $R'$. Thus,$((\forall R^-.C)[R' := R' \pm (i,j)])^\mathcal{I} = (\forall R^-.(C[R' := R' \pm (i,j)]))^\mathcal{I}$
 \item As $(\forall R. C)^\mathcal{I} = \{x\; |\; \forall y. (x,y) \in
   R^\mathcal{I} \Rightarrow y \in C^\mathcal{I}\}$,\\ $((\forall R .C)[R
   := R + (i,j)])^\mathcal{I} = \{x\; |\; \forall y. (x,y) \in
   R^\mathcal{I} \cup \{(i,j)\} \Rightarrow\\ y \in C[R := R +
   (i,j)]^\mathcal{I}\}$. That is $((\forall R .C)[R
   := R + (i,j)])^\mathcal{I} =\\ \{x\; |\; (\forall y. (x,y) \in
   R^\mathcal{I} \Rightarrow y \in C[R := R +
   (i,j)]^\mathcal{I}) \wedge\\ (x = i \Rightarrow  j \in C[R := R +
   (i,j)]^\mathcal{I})\} = \{x\; |\; (x = i \Rightarrow (\forall y. (x,y) \in
   R^\mathcal{I} \Rightarrow\\ y \in C[R := R +
   (i,j)]^\mathcal{I} \wedge j \in (C[R := R + (i,j)])^\mathcal{I})) \wedge\\ (x \neq i \Rightarrow \forall y. (x,y) \in
   R^\mathcal{I} \Rightarrow y \in C[R := R +
   (i,j)]^\mathcal{I})\}$. Moreover,\\ $((o_i \Rightarrow \forall R. (C [R := R + (i,j)]) \sqcap \exists R. (o_j
   \sqcap C[R := R + (i,j)])) \sqcap\\ (\neg o_i \Rightarrow \forall R. (C [R := R + (i,j)]))^\mathcal{I} = \{x\; |\; (x = i \Rightarrow (\forall y. (x,y) \in
   R^\mathcal{I} \Rightarrow\\ y \in C[R := R +
   (i,j)]^\mathcal{I} \wedge j \in (C[R := R + (i,j)])^\mathcal{I})) \wedge (x \neq i \Rightarrow\\ \forall y. (x,y) \in
   R^\mathcal{I} \Rightarrow y \in C[R := R +
   (i,j)]^\mathcal{I})\}$. Thus,\\ $((\forall R .C)[R := R + (i,j)])^\mathcal{I}= 
 ((o_i \Rightarrow \forall R. (C [R := R + (i,j)]) \sqcap\\ \exists R. (o_j
   \sqcap C[R := R + (i,j)])) \sqcap (\neg o_i \Rightarrow \forall
   R. (C [R := R + (i,j)]))^\mathcal{I}$ 
\item One can see that $(\forall S^-. C)[S := S + (i,j)]$ is
  similar to\\ $(\forall S^-. C)[S^- := S^- + (j,i)]$, that is
  replacing $S$ by $S^-$ and swapping $i$ and $j$ in the previous case.
 \item As $(\forall R. C)^\mathcal{I} = \{x\; |\; \forall y. (x,y) \in
   R^\mathcal{I} \Rightarrow y \in C^\mathcal{I}\}$,\\ $((\forall R .C)[R
   := R - (i,j)])^\mathcal{I} = \{x\; |\; \forall y. (x,y) \in
   R^\mathcal{I} \cap \overline{\{(i,j)\}} \Rightarrow\\ y \in C[R := R -
   (i,j)]^\mathcal{I}\}$. That is $((\forall R .C)[R
   := R - (i,j)])^\mathcal{I} =\\ \{x\; |\; (x \neq i \wedge \forall y. ((x,y) \in
   R^\mathcal{I} \Rightarrow y \in C[R := R -
   (i,j)]^\mathcal{I})) \vee\\ (x = i \wedge \forall y. ((x,y) \in
   R^\mathcal{I} \Rightarrow y \in C[R := R -
   (i,j)]^\mathcal{I} \vee y = j))\} =\\ \{x\; |\; \forall y. (x \neq i \wedge (x,y) \in
   R^\mathcal{I} \Rightarrow y \in C[R := R -
   (i,j)]^\mathcal{I})\} \cap\\ \{x\; |\; \forall y. (x = i \wedge (x,y) \in
   R^\mathcal{I} \Rightarrow y \in C[R := R -
   (i,j)]^\mathcal{I} \vee y = j)\}$. Moreover, $(o_i \Rightarrow \forall
   R. (o_j \sqcup C [R := R - (i,j)]))^\mathcal{I} = \{x\; |\; \forall y. (x = i \wedge\\ (x,y) \in
   R^\mathcal{I} \Rightarrow y \in C[R := R -
   (i,j)]^\mathcal{I} \vee y = j)\}$ and\\ $(\neg o_i \Rightarrow 
  \forall R. (C [R := R - (i,j)]))^\mathcal{I} = \{x\; |\; \forall y. (x \neq i \wedge (x,y) \in
   R^\mathcal{I} \Rightarrow\\ y \in C[R := R -
   (i,j)]^\mathcal{I})\}$. Thus, $((\forall R .C)[R := R - (i,j)])^\mathcal{I} =\\
 ((o_i \Rightarrow \forall R. (C [R := R - (i,j)] \sqcup o_j)) \sqcap
 (\neg o_i \Rightarrow \forall R. (C[R := R - (i,j)])))^\mathcal{I}$
\item One can see that $(\forall S^-. C)[S := S - (i,j)]$ is
  similar to\\ $(\forall S^-. C)[S^- := S^- - (j,i)]$, that is
  replacing $S$ by $S^-$ and swapping $i$ and $j$ in the previous
  case.
\end{enumerate}
\end{proof}

Next we prove the \lemmaref{lemma2}.
\begin{framed}
Let $\Sigma$ be a signature, $\mathcal{T}$ is terminating.
\end{framed}
\begin{proof}
To prove the termination, we introduce a pre-ordering
($\mathcal{M}, \mathcal{M}'$) defined
as:
\begin{itemize}
\item $\mathcal{M}(\bot) = 0$
\item $\mathcal{M}(c) = 0$
\item $\mathcal{M}(\neg\; C) = \mathcal{M}(C)$
\item $\mathcal{M}(C\; \sqcap\; D) = max(\mathcal{M}(C),
  \mathcal{M}(D))$
\item $\mathcal{M}(C\; \sqcup\; D) = max(\mathcal{M}(C),
  \mathcal{M}(D))$
\item $\mathcal{M}((\geq n\; S\; C)) = \mathcal{M}((\geq n\; S^-\; C)) =
  \mathcal{M}(C) + 1$
\item $\mathcal{M}((< n\; S\; C)) = \mathcal{M}((< n\; S^-\; C)) =
  \mathcal{M}(C) + 1$
\item $\mathcal{M}(\exists R. C) = \mathcal{M}(\exists R^-. C) =
  \mathcal{M}(C) + 1$
\item $\mathcal{M}(\forall R. C) = \mathcal{M}(\forall R^-. C) =
  \mathcal{M}(C) + 1$
\item $\mathcal{M}(o) = 0$
\item $\mathcal{M}(\exists R.Self) = \mathcal{M}(\exists R^-.Self) = 0$
\item $\mathcal{M}(C \theta) = \mathcal{M}(C) +1$ 
\item $\mathcal{M}'(\bot) = 0$ 
\item $\mathcal{M}'(c) = 0$
\item $\mathcal{M}'((\neg\; C)) = \mathcal{M}'(C)$
\item $\mathcal{M}'(C\; \sqcap\; D) = max(\mathcal{M}'(C),
  \mathcal{M}'(D))$
\item $\mathcal{M}'(C\; \sqcup\; D) = max(\mathcal{M}'(C),
  \mathcal{M}'(D))$
\item $\mathcal{M}'((\geq n\; S\; C)) = \mathcal{M}'((\geq n\; S^-\; C))
  = \mathcal{M}'(C)$
\item $\mathcal{M}'((< n\; S\; C)) = \mathcal{M}'((< n\; S^-\; C)) = \mathcal{M}'(C)$
\item $\mathcal{M}'((\exists R. C)) = \mathcal{M}'((\exists R^-. C)) = \mathcal{M}'(C)$
\item $\mathcal{M}'((\forall R. C)) = \mathcal{M}'((\forall R^-. C)) = \mathcal{M}'(C)$
\item $\mathcal{M}'(o) = 0$
\item $\mathcal{M}'(\exists R.Self) = \mathcal{M}'(\exists R^-.Self) = 0$
\item $\mathcal{M}'(\bot \theta) = 0$ 
\item $\mathcal{M}'(c \theta) = 0$
\item $\mathcal{M}'((\neg\; C) \theta) = \mathcal{M}'(C \theta) + 1$
\item $\mathcal{M}'((C\; \sqcap\; D) \theta) = max(\mathcal{M}'(C \theta),
  \mathcal{M}'(D \theta)) + 1$
\item $\mathcal{M}'((C\; \sqcup\; D) \theta) = max(\mathcal{M}'(C \theta),
  \mathcal{M}'(D \theta)) + 1$
\item $\mathcal{M}'((\geq n\; S\; C) \theta) = \mathcal{M}'((\geq n\;
  S^-\; C) \theta) = \mathcal{M}'(C \theta)
  + 1$
\item $\mathcal{M}'((< n\; S\; C) \theta) = \mathcal{M}'((< n\; S^-\; C)
  \theta) = \mathcal{M}'(C \theta) +
  1$
\item $\mathcal{M}'((\exists R. C) \theta) = \mathcal{M}'((\exists
  ^-R. C) \theta) = \mathcal{M}'(C \theta) +1$
\item $\mathcal{M}'((\forall R. C) \theta) = \mathcal{M}'((\forall
  R^-. C) \theta) = \mathcal{M}'(C \theta) +1$
\item $\mathcal{M}'(o \theta) = 0$
\item $\mathcal{M}'((\exists R.Self) \theta) = \mathcal{M}'((\exists
  R^-.Self) \theta) = 0$
\end{itemize}
For every concept $C$, $\mathcal{M}(C)$ and $\mathcal{M}'(C)$ are positive. We now
prove that the transformations either strictly decrease $\mathcal{M}$
or keep
$\mathcal{M}$ constant and strictly decrease $\mathcal{M}'$. We
compute the results of the functions for the left- and the right- hand
side for each transformation.
\begin{enumerate}
\item \mbox{}\\
$\begin{array}{ccccc}
\mathcal{M}(\bot\; \theta)& = & \mathcal{M}(\bot) +1& = & 1\\
\mathcal{M}(\bot)& & &= & 0  
\end{array}$
\item \mbox{}\\
$\begin{array}{ccccc}
\mathcal{M}(o\; \theta)& =& \mathcal{M}(o) + 1& =& 1\\
\mathcal{M}(o)& & &=& 0
\end{array}$
\item \mbox{}\\
$\begin{array}{ccccc}
\mathcal{M}(c [R := R \pm (i,j)])& =&  \mathcal{M}(c) + 1& =& 1\\
\mathcal{M}(c)& & &= & 0
\end{array}$
\item \mbox{}\\
$\begin{array}{ccccc}
\mathcal{M}(c [c' := c' \pm i])& =& \mathcal{M}(c) + 1& =& 1 \\
\mathcal{M}(c)& & & =& 0
\end{array}$
\item \mbox{}\\
$\begin{array}{ccccc}
\mathcal{M}(c [c := c + i])& =& \mathcal{M}(c) + 1& =& 1 \\
\mathcal{M}(c \sqcup o_i)& =& max(\mathcal{M}(c), \mathcal{M}(o_i))&=& 0
\end{array}$
\item \mbox{}\\
$\begin{array}{ccccc}
\mathcal{M}(c [c := c - i])& =& \mathcal{M}(c) + 1& =& 1 \\
\mathcal{M}(c \sqcap \neg o_i)& =& max(\mathcal{M}(c),
  \mathcal{M}(\neg o_i))& =& 0
\end{array}$
\item \mbox{}\\
$\begin{array}{ccccc}
\mathcal{M}((\neg C)\; \theta)& =& \mathcal{M}(\neg C) + 1& =&
  \mathcal{M}(C) + 1\\
\mathcal{M}(\neg(C\; \theta))& =& \mathcal{M}(C\; \theta)& =& \mathcal{M}(C) + 1 \\
\mathcal{M}'((\neg C)\; \theta)&
  =& \mathcal{M}'(C \theta) + 1 \\
 \mathcal{M}'(\neg (C \theta))& =&
  \mathcal{M}'(C \theta)
\end{array}$
\item \mbox{}\\
$\begin{array}{ccc}
\mathcal{M}((C \sqcup D)\;\theta)& =& max(\mathcal{M}(C),
  \mathcal{M}(D)) + 1\\
 \mathcal{M}(C \theta \sqcup D \theta) & =& max(\mathcal{M}(C) + 1,
  \mathcal{M}(D) + 1)\\ 
\mathcal{M}'((C \sqcup D)\;\theta)& =& max(\mathcal{M}'(C \theta),
  \mathcal{M}'(D \theta)) + 1 \\
\mathcal{M}'(C \theta \sqcup D \theta) & = &max(\mathcal{M}'(C \theta),
  \mathcal{M}'(D \theta))
\end{array}$
\item  \mbox{}\\
$\begin{array}{ccc}
\mathcal{M}((C \sqcap D)\;\theta)& =& max(\mathcal{M}(C),
  \mathcal{M}(D)) + 1 \\
\mathcal{M}(C \theta \sqcap D \theta)& =& max(\mathcal{M}(C) + 1,
  \mathcal{M}(D) + 1) \\
\mathcal{M}'((C \sqcap D)\;\theta)& =& max(\mathcal{M}'(C \theta),
  \mathcal{M}'(D \theta)) + 1 \\
 \mathcal{M}'(C \theta \sqcap D \theta) & =& max(\mathcal{M}'(C \theta),
  \mathcal{M}'(D \theta))
\end{array}$
\item \mbox{}\\
$\begin{array}{ccccc}
\mathcal{M}(\exists R.Self [c := c \pm i])& =&
  \mathcal{M}(\exists R.Self) + 1& =& 1 \\
\mathcal{M}(\exists
  R.Self)& &  &=& 0
\end{array}$
\item As the definitions of $\mathcal{M}$ and $\mathcal{M}'$ do not
  discriminate $R$ and $R^-$, this rule is identic to the previous one.
\item \mbox{}\\
$\begin{array}{ccccc}
\mathcal{M}(\exists R.Self [R' := R' \pm (i,j)]) &=&
  \mathcal{M}(\exists R.Self) + 1 &=& 1 \\
\mathcal{M}(\exists
  R.Self)& & &=& 0
\end{array}$
\item As the definitions of $\mathcal{M}$ and $\mathcal{M}'$ do not
  discriminate $R$ and $R^-$, this rule is identic to the previous one.
\item \mbox{}\\
$\begin{array}{ccccc}
\mathcal{M}(\exists R.Self [R := R + (i,j)]) &=&
  \mathcal{M}(\exists R.Self) + 1 &= &1 \\
\mathcal{M}( (o_i \sqcap
  o_j) \sqcup \exists R.Self) &=& 0
\end{array}$
\item As the definitions of $\mathcal{M}$ and $\mathcal{M}'$ do not
  discriminate $R$ and $R^-$, this rule is identic to the previous one.
\item \mbox{}\\
$\begin{array}{ccccc}
\mathcal{M}(\exists R.Self [R := R - (i,j)]) &=&
  \mathcal{M}(\exists R.Self) + 1 &=& 1 \\
\mathcal{M}((\neg o_i \sqcup
 \neg o_j) \sqcap \exists R.Self)& & &=& 0
\end{array}$
\item As the definitions of $\mathcal{M}$ and $\mathcal{M}'$ do not
  discriminate $R$ and $R^-$, this rule is identic to the previous one.
\item \mbox{}\\
$\begin{array}{ccccc}
\mathcal{M}((\bowtie\; n\; S\; C) [c := c \pm i]) &=&
  \mathcal{M}((\bowtie\; n\; S\; C)) + 1 &=& \mathcal{M}(C) + 2\\
\mathcal{M}((\bowtie\; n\; S\; C [c := c \pm i]))& =& \mathcal{M}(C [c
  := c \pm i]) + 1 &=& \mathcal{M}(C) + 2 \\
\mathcal{M}'((\bowtie\; n\; S\; C) [c := c \pm i]) &= &\mathcal{M}'(C
  [c := c \pm i]) + 1 \\
\mathcal{M}'((\bowtie\; n\; S\; C [c := c \pm i]))& =& \mathcal{M}'(C
  [c := c \pm i])
\end{array}$
\item As the definitions of $\mathcal{M}$ and $\mathcal{M}'$ do not
  discriminate $R$ and $R^-$, this rule is identic to the previous one.
\item \mbox{}\\
$\begin{array}{ccccc}
\mathcal{M}((\bowtie\; n\; S\; C) [R' := R' \pm (i,j)]) &=&
  \mathcal{M}((\bowtie\; n\; S\; C)) + 1 &=& \mathcal{M}(C) + 2 \\
\mathcal{M}( (\bowtie\; n\;
  S\; C [R' := R' \pm (i,j)]) & =& \mathcal{M}(C [R' := R' \pm (i,j)]) + 1
  & =& \mathcal{M}(C) + 2 \\
\mathcal{M}'((\bowtie\; n\; S\; C)
  [R' := R' \pm (i,j)]) &=& \mathcal{M}'(C [R' := R' \pm (i,j)]) + 1 \\
\mathcal{M}'( (\bowtie\; n\;
  S\; C [R' := R' \pm (i,j)]))& =& \mathcal{M}'(C [R' := R' \pm (i,j)])
\end{array}$
\item As the definitions of $\mathcal{M}$ and $\mathcal{M}'$ do not
  discriminate $R$ and $R^-$, this rule is identic to the previous one.
\item \mbox{}\\
$\begin{array}{ccc}
\mathcal{M}((\bowtie\; n\; S\; C) [S := S + (i,j)]) &=&
  \mathcal{M}((\bowtie\; n\; S\; C) + 1 \\
&=& \mathcal{M}(C) + 2 \\
\mathcal{M}(c_{1,16}) &=& max(\mathcal{M}(o_j), \mathcal{M}(C[S := S +
(i,j)])) \\
&=& \mathcal{M}(C) + 1\\
\mathcal{M}(c_{2,16}) &=& max(\mathcal{M}(\neg o_j), \mathcal{M}(\neg (C[S := S +
(i,j)]))) \\
&=& \mathcal{M}(C) + 1\\
\mathcal{M}(b_{1,16}) &=& \mathcal{M}(o_i)\\
 &=& 0\\
\mathcal{M}(b_{2,16}) &=& \mathcal{M}(\exists U. c_{1,16}) \\
& =&
\mathcal{M}(c_{1,16}) + 1 \\
&=& \mathcal{M}(C)
+ 1\\
\mathcal{M}(b_{3,16}) &=& \mathcal{M}(\forall S. \neg o_j)\\
&=&
\mathcal{M}(\neg o_J) + 1 \\
& = & 1\\
\mathcal{M}(b_{4,16}) &=& \mathcal{M}((\bowtie\; (n-1)\; S\; C[S := S
+ (i,j)]) )\\
&=&
\mathcal{M}(C[S := S + (i,j)]) + 1 \\
& = & \mathcal{M}(C) + 2\\
\mathcal{M}(b_{11,16}) &=& \mathcal{M}(\neg o_i) \\
&=& 0\\
\mathcal{M}(b_{12,16}) &=& \mathcal{M}(\forall U. c_{2,16}) \\
& =&
\mathcal{M}(c_{2,16}) + 1\\
 &=& \mathcal{M}(C)
+ 2\\
\mathcal{M}(b_{13,16}) &=& \mathcal{M}(\exists S. o_j)\\
&=&
\mathcal{M}(o_J) + 1 \\
& = & 1\\
\end{array}$
\newpage
$\begin{array}{ccc}
\mathcal{M}(b_{14,16}) &=& \mathcal{M}((\bowtie\; n\; S\; C[S := S
+ (i,j)]) ) \\
&=&
\mathcal{M}(C[S := S + (i,j)]) + 1 \\
& = & \mathcal{M}(C) + 2\\
\mathcal{M}(a_{1,16}) &=& max(\mathcal{M}(b_{1,16}), 
\mathcal{M}(b_{2,16}), \mathcal{M}(b_{3,16}), \mathcal{M}(b_{4,16}) \\
& = &
  \mathcal{M}(C) + 2\\
\mathcal{M}(a_{2,16}) &=& max(\mathcal{M}(b_{11,16}),
\mathcal{M}(b_{12,16}), \mathcal{M}(b_{13,16}), \mathcal{M}(b_{14,16})
\\
& = & \mathcal{M}(C) + 2\\
\mathcal{M}(RHS_{16}) &=&  max(\mathcal{M}(a_{1,16}),
\mathcal{M}(a_{2,16})) \\
&=& \mathcal{M}(C) + 2\\
\mathcal{M}'((\bowtie\; n\; S\; C) [S := S + (i,j)]) &=&
\mathcal{M}'(C [S := S + (i,j)]) + 1 \\ 
\mathcal{M}'(c_{1,16}) &=& max(\mathcal{M}'(o_j), \mathcal{M}'(C[S := S +
(i,j)])) \\
&=& \mathcal{M}'(C [S := S + (i,j)])\\
\mathcal{M}'(c_{2,16}) &=& max(\mathcal{M}'(\neg o_j), \mathcal{M}'(\neg (C[S := S +
(i,j)]))) \\
&=& \mathcal{M}'(C [S := S + (i,j)])\\
\mathcal{M}'(b_{1,16}) &=& \mathcal{M}(o_i) \\
&=& 0\\
\mathcal{M}'(b_{2,16}) &=& \mathcal{M}(\exists U. c_{1,16}) \\
& =&
\mathcal{M}'(c_{1,16})\\
&=& \mathcal{M}'(C [S := S + (i,j)])\\
\mathcal{M}'(b_{3,16}) &=& \mathcal{M}(\forall S. \neg o_j)\\
&=&
\mathcal{M}'(\neg o_J)\\
& = & 0\\
\mathcal{M}'(b_{4,16}) &=& \mathcal{M}((\bowtie\; (n-1)\; S\; C[S := S
+ (i,j)]) \\
&=&
\mathcal{M}'(C[S := S + (i,j)])\\
\mathcal{M}'(b_{11,16}) &=& \mathcal{M}'(\neg o_i) \\
&=& 0\\
\mathcal{M}'(b_{12,16}) &=& \mathcal{M}'(\forall U. c_{2,16}) \\
& =&
\mathcal{M}'(c_{2,16})\\
&=& \mathcal{M}(C [S := S + (i,j])\\
\mathcal{M}'(b_{13,16}) &=& \mathcal{M}'(\exists S. o_j)\\
&=&
\mathcal{M}'(o_J)\\
& = & 0\\
\mathcal{M}'(b_{14,16}) &=& \mathcal{M}'((\bowtie\; n\; S\; C[S := S
+ (i,j)]) )\\
&=&
\mathcal{M}'(C[S := S + (i,j)])\\
\mathcal{M}'(a_{1,16}) &=& max(\mathcal{M}'(b_{1,16}), 
\mathcal{M}'(b_{2,16}), \mathcal{M}'(b_{3,16}),
\mathcal{M}'(b_{4,16})\\
 & = &
  \mathcal{M}'(C [S := S +(i,j)])\\
\mathcal{M}'(a_{2,16}) &=& max(\mathcal{M}'(b_{11,16}),
\mathcal{M}'(b_{12,16}), \mathcal{M}'(b_{13,16}),
\mathcal{M}'(b_{14,16}) \\
& = & \mathcal{M}'(C [S := S + (i,j)])\\
\mathcal{M}'(RHS_{16}) &=&  max(\mathcal{M}'(a_{1,16}),
\mathcal{M}'(a_{2,16})) \\
&=& \mathcal{M}(C [S := S + (i,j)])\\
\end{array}$
\item As the definitions of $\mathcal{M}$ and $\mathcal{M}'$ do not
  discriminate $R$ and $R^-$, this rule is identic to the previous one.
\item \mbox{}\\
$\begin{array}{ccc}
\mathcal{M}((\bowtie\; n\; S\; C) [S := S - (i,j)]) &=&
  \mathcal{M}((\bowtie\; n\; S\; C) + 1 \\
&=& \mathcal{M}(C) + 2 \\
\mathcal{M}(c_{1,17}) &=& max(\mathcal{M}(o_j), \mathcal{M}(C[S := S -
(i,j)])) \\
&=& \mathcal{M}(C) + 1\\
\mathcal{M}(c_{2,17}) &=& max(\mathcal{M}(\neg o_j), \mathcal{M}(\neg (C[S := S -
(i,j)]))) \\
&=& \mathcal{M}(C) + 1\\
\mathcal{M}(b_{1,17}) &=& \mathcal{M}(o_i)\\
 &=& 0\\
\mathcal{M}(b_{2,17}) &=& \mathcal{M}(\exists U. c_{1,17}) \\
& =&
\mathcal{M}(c_{1,17}) + 1\\
&=& \mathcal{M}(C)
+ 2\\
\mathcal{M}(b_{3,17}) &=& \mathcal{M}(\exists S. o_j)\\
&=&
\mathcal{M}(\neg o_J) + 1 \\
& = & 1\\
\mathcal{M}(b_{4,17}) &=& \mathcal{M}((\bowtie\; (n+1)\; S\; C[S := S
- (i,j)]) )\\
&=&
\mathcal{M}(C[S := S - (i,j)]) + 1 \\
& = & \mathcal{M}(C) + 2\\
\mathcal{M}(b_{11,17}) &=& \mathcal{M}(\neg o_i) \\
&=& 0\\
\mathcal{M}(b_{12,17}) &=& \mathcal{M}(\forall U. c_{2,17}) \\
& =&
\mathcal{M}(c_{2,17}) + 1\\
 &=& \mathcal{M}(C)
+ 2\\
\mathcal{M}(b_{13,17}) &=& \mathcal{M}(\forall S.\neg o_j)\\
&=&
\mathcal{M}(o_J) + 1 \\
& = & 1\\
\mathcal{M}(b_{14,17}) &=& \mathcal{M}((\bowtie\; n\; S\; C[S := S
+ (i,j)]) ) \\
&=&
\mathcal{M}(C[S := S - (i,j)]) + 1 \\
& = & \mathcal{M}(C) + 2\\
\mathcal{M}(a_{1,17}) &=& max(\mathcal{M}(b_{1,17}), 
\mathcal{M}(b_{2,17}), \mathcal{M}(b_{3,17}), \mathcal{M}(b_{4,17}) \\
& = &
  \mathcal{M}(C) + 2\\
\mathcal{M}(a_{2,17}) &=& max(\mathcal{M}(b_{11,17}),
\mathcal{M}(b_{12,17}), \mathcal{M}(b_{13,17}), \mathcal{M}(b_{14,17})
\\
& = & \mathcal{M}(C) + 2\\
\mathcal{M}(RHS_{17}) &=&  max(\mathcal{M}(a_{1,17}),
\mathcal{M}(a_{2,17})) \\
&=& \mathcal{M}(C) + 2\\
\mathcal{M}'((\bowtie\; n\; S\; C) [S := S - (i,j)]) &=&
\mathcal{M}'(C [S := S - (i,j)]) + 1 \\ 
\mathcal{M}'(c_{1,17}) &=& max(\mathcal{M}'(o_j), \mathcal{M}'(C[S := S -
(i,j)])) \\
&=& \mathcal{M}'(C [S := S - (i,j)])\\
\mathcal{M}'(c_{2,17}) &=& max(\mathcal{M}'(\neg o_j), \mathcal{M}'(\neg (C[S := S -
(i,j)]))) \\
&=& \mathcal{M}'(C [S := S - (i,j)])\\
\mathcal{M}'(b_{1,17}) &=& \mathcal{M}(o_i) \\
&=& 0\\
\mathcal{M}'(b_{2,17}) &=& \mathcal{M}(\exists U. c_{1,17}) \\
& =&
\mathcal{M}'(c_{1,17})\\
&=& \mathcal{M}'(C [S := S - (i,j)])
\end{array}$
\newpage
$\begin{array}{ccc}
\mathcal{M}'(b_{3,17}) &=& \mathcal{M}(\forall S. \neg o_j)\\
&=&
\mathcal{M}'(\neg o_J)\\
& = & 0\\
\mathcal{M}'(b_{4,17}) &=& \mathcal{M}((\bowtie\; (n+1)\; S\; C[S := S
- (i,j)]) \\
&=&
\mathcal{M}'(C[S := S - (i,j)])\\
\mathcal{M}'(b_{11,17}) &=& \mathcal{M}'(\neg o_i) \\
&=& 0\\
\mathcal{M}'(b_{12,17}) &=& \mathcal{M}'(\forall U. c_{2,17}) \\
& =&
\mathcal{M}'(c_{2,17})\\
&=& \mathcal{M}(C [S := S - (i,j])\\
\mathcal{M}'(b_{13,17}) &=& \mathcal{M}'(\forall S. \neg o_j)\\
&=&
\mathcal{M}'(o_J)\\
& = & 0\\
\mathcal{M}'(b_{14,17}) &=& \mathcal{M}'((\bowtie\; n\; S\; C[S := S
- (i,j)]) )\\
&=&
\mathcal{M}'(C[S := S - (i,j)])\\
\mathcal{M}'(a_{1,17}) &=& max(\mathcal{M}'(b_{1,17}), 
\mathcal{M}'(b_{2,17}), \mathcal{M}'(b_{3,17}),
\mathcal{M}'(b_{4,17})\\
 & = &
  \mathcal{M}'(C [S := S - (i,j)])\\
\mathcal{M}'(a_{2,17}) &=& max(\mathcal{M}'(b_{11,17}),
\mathcal{M}'(b_{12,17}), \mathcal{M}'(b_{13,17}),
\mathcal{M}'(b_{14,17}) \\
& = & \mathcal{M}'(C [S := S - (i,j)])\\
\mathcal{M}'(RHS_{17}) &=&  max(\mathcal{M}'(a_{1,17}),
\mathcal{M}'(a_{2,17})) \\
&=& \mathcal{M}(C [S := S - (i,j)])\\
\end{array}$
\item As the definitions of $\mathcal{M}$ and $\mathcal{M}'$ do not
  discriminate $R$ and $R^-$, this rule is identic to the previous one.
\item \mbox{} \\
$\begin{array}{ccccc}
\mathcal{M}((\exists R .C)[c := c \pm i])& = &
\mathcal{M}(\exists R .C) + 1& = & \mathcal{M}(C) + 2\\
\mathcal{M}(\exists R .(C[c := c \pm
  i]))& = &\mathcal{M}(C[c := c \pm
  i]) + 1 &= & \mathcal{M}(C) + 2\\
\mathcal{M}'((\exists R .C)[c := c \pm i])& &
& =& \mathcal{M}'(C[c := c \pm i]) + 1\\
\mathcal{M}'(\exists R .(C[c := c \pm
  i]))& & &= & \mathcal{M}(C[c := c \pm i])\\
\end{array}$
\item As the definitions of $\mathcal{M}$ and $\mathcal{M}'$ do not
  discriminate $R$ and $R^-$, this rule is identic to the previous one.
\item \mbox{}\\
$\begin{array}{ccccc} 
\mathcal{M}((\exists R .C)[R' := R' \pm (i,j)])& = &
\mathcal{M}(\exists R .C) + 1& = & \mathcal{M}(C) + 2\\
\mathcal{M}(\exists R .(C[R' := R' \pm
  (i,j)]))& = &\mathcal{M}(C[R' := R' \pm
  (i,j)]) + 1 &= & \mathcal{M}(C) + 2\\
\mathcal{M}'((\exists R .C)[R' := R' \pm (i,j)])& &
& =& \mathcal{M}'(C[R' := R' \pm (i,j)]) + 1\\
\mathcal{M}'(\exists R .(C[R' := R' \pm
  (i,j)]))& & &= & \mathcal{M}(C[R' := R' \pm (i,j)])
\end{array}$
\item As the definitions of $\mathcal{M}$ and $\mathcal{M}'$ do not
  discriminate $R$ and $R^-$, this rule is identic to the previous one.
\item \mbox{}\\
$\begin{array}{cclll}  
\mathcal{M}((\exists R .C)[R := R + (i,j)]) & = & \mathcal{M}(\exists
R . C) + 1 \\
&=&  \mathcal{M}(C) + 2\\
\mathcal{M}(d_1) &= & max(\mathcal{M}(\neg
o_i),\mathcal{M}(d_{1,1}),\mathcal{M}(d_{1,2}), \mathcal{M}(o_i),\mathcal{M}(d_{1,3}))\\
&=& 
max(\mathcal{M}(o_i),\mathcal{M}(d_{1,1,1}) +
2,max(\mathcal{M}(o_j),\mathcal{M}(d_{1,2,1})) + 1,\\
& & 0,\mathcal{M}(d_{1,3,1}) + 1)\\
&=& max(0,\mathcal{M}(C) + 2,max(0,\mathcal{M}(C) +1) + 1,
\mathcal{M}(C) + 2)\\
&=& \mathcal{M}(C) + 2\\
\mathcal{M}'((\exists R .C)[R := R + (i,j)]) & = & \mathcal{M}'(C [R
:= R + (i,j)]) + 1 \\
\mathcal{M}'(d_1) &= & max(\mathcal{M}'(\neg o_i),\mathcal{M}'(d_{1,1}),\mathcal{M}'(d_{1,2}),\mathcal{M}'(o_i),\mathcal{M}'(d_{1,3}))\\
&=& 
max(\mathcal{M}'(o_i),\mathcal{M}'(C[R := R + (i,j)]), \\
& & max(\mathcal{M}'(o_j),\mathcal{M}'(C[R := R + (i,j)])),0,\mathcal{M}'(C[R := R + (i,j)]))\\
&=& \mathcal{M}'(C[R := R + (i,j)])\\
\end{array}$
\item As the definitions of $\mathcal{M}$ and $\mathcal{M}'$ do not
  discriminate $R$ and $R^-$, this rule is identic to the previous one.
\item \mbox{}\\
$\begin{array}{cclll}
\mathcal{M}((\exists R .C)[R := R - (i,j)]) &=& \mathcal{M}(\exists
R.C) + 1\\
&=& \mathcal{M}(C) + 2\\
\mathcal{M}(d_2) &=& max(\mathcal{M}(\neg o_i),\mathcal{M}(d_{2,1}), \mathcal{M}(o_i),\mathcal{M}(d_{2,2}))\\
&=& max(\mathcal{M}(o_i), max(\mathcal{M}(d_{2,1,1}),\mathcal{M}(\neg
o_j)) + 1,
0,\mathcal{M}(d_{2,2,1}) + 1)\\
&=& max(0, max(\mathcal{M}(C) + 1, \mathcal{M}(o_i)) + 1, \mathcal{M}(C) + 2)\\
&=& \mathcal{M}(C) + 2\\
\mathcal{M}'((\exists R .C)[R := R - (i,j)]) &=& \mathcal{M}'(C[R := R -
(i,j)]) + 1\\
\mathcal{M}'(d_2) &=& max(\mathcal{M}'(\neg o_i), \mathcal{M}'(d_{2,1}), \mathcal{M}'(o_i),
\mathcal{M}'(d_{2,2}))\\
&=& max(\mathcal{M}'(o_i), max(\mathcal{M}'(C [R := R - (i,j)]),
\mathcal{M}'(\neg o_j)),\\
& & 0,\mathcal{M}'(C[R := R -(i,j)]))\\
&=& max(0, max(\mathcal{M}'(C[R := R - (i,j)]), \mathcal{M}'(o_j)),\\
& & \mathcal{M}'(C[R := R
- (i,j)]))\\
&=& \mathcal{M}'(C[R := R - (i,j)])
\end{array}$
\item As the definitions of $\mathcal{M}$ and $\mathcal{M}'$ do not
  discriminate $R$ and $R^-$, this rule is identic to the previous one.
\item \mbox{}\\
$\begin{array}{ccccc} 
\mathcal{M}((\forall R .C)[c := c \pm i]) &=& \mathcal{M}(\forall
R. C) + 1\\
&=& \mathcal{M}(C) + 2\\
\mathcal{M}(\forall R .(C[c := c \pm i])) &=& \mathcal{M}(C[c := c \pm
i]) + 1\\
&=& \mathcal{M}(C) + 2\\
\mathcal{M}'((\forall R .C)[c := c \pm i]) &=& \mathcal{M}'(C[c :=
c\pm i]) + 1\\
\mathcal{M}'(\forall R .(C[c := c \pm i])) &=& \mathcal{M}'(C[c := c \pm
i])
\end{array}$
\item As the definitions of $\mathcal{M}$ and $\mathcal{M}'$ do not
  discriminate $R$ and $R^-$, this rule is identic to the previous one.
\item \mbox{}\\
$\begin{array}{ccccc} 
\mathcal{M}((\forall R .C)[R' := R' \pm (i,j)]]) &=& \mathcal{M}(\forall
R. C) + 1\\
&=& \mathcal{M}(C) + 2\\
\mathcal{M}(\forall R .(C[R' := R' \pm (i,j)])) &=& \mathcal{M}(C[R' := R' \pm
(i,j)]) + 1\\
&=& \mathcal{M}(C) + 2\\
\mathcal{M}'((\forall R .C)[R' := R' \pm (i,j)]) &=& \mathcal{M}'(C[R' :=
R' \pm (i,j)]) + 1\\
\mathcal{M}'(\forall R .(C[R' := R' \pm (i,j)])) &=& \mathcal{M}'(C[R' := R' \pm
(i,j)])\\
\end{array}$
\item As the definitions of $\mathcal{M}$ and $\mathcal{M}'$ do not
  discriminate $R$ and $R^-$, this rule is identic to the previous one.
\item \mbox{}\\
$\begin{array}{cclll}
\mathcal{M}((\forall R .C)[R := R + (i,j)]) &=& \mathcal{M}(\forall R.
C) + 1\\
&=& \mathcal{M}(C) + 2\\
\mathcal{M}(d_3) &=& max(\mathcal{M}(\neg o_i),\mathcal{M}(d_{3,1}), \mathcal{M}(d_{3,2}),
\mathcal{M}(o_i), \mathcal{M}(d_{3,3}))\\
&=& max(\mathcal{M}(o_i),
\mathcal{M}(d_{3,1,1}),max(\mathcal{M}(o_j),\mathcal{M}(d_{3,2,1})) + 1,\\
& & 0,\mathcal{M}(d_{3,3,1}) + 1)\\
&=& max(0, \mathcal{M}(C) + 2, \mathcal{M}(C) + 2)\\
&=& \mathcal{M}(C) + 2\\
\mathcal{M}'((\forall R .C)[R := R + (i,j)]) &=& \mathcal{M}(\forall R.
C[R := R + (i,j)]) + 1\\
\mathcal{M}'(d_3) &=& max(\mathcal{M}'(\neg o_i),\mathcal{M}'(d_{3,1}), \mathcal{M}'(d_{3,2}),
\mathcal{M}'(o_i), \mathcal{M}'(d_{3,3}))\\
&=& max(\mathcal{M}'(o_i),
\mathcal{M}'(C[R := R + (i,j)]),\\
& & max(\mathcal{M}'(o_j),\mathcal{M}(C[R := R + (i,j)])),
0,\mathcal{M}(C[R := R + (i,j)]))\\
&=& max(0, \mathcal{M}(C[R := R + (i,j)]), \mathcal{M}(C[R := R + (i,j)]))\\
&=& \mathcal{M}(C)\\
\end{array}$ 
\item As the definitions of $\mathcal{M}$ and $\mathcal{M}'$ do not
  discriminate $R$ and $R^-$, this rule is identic to the previous one.
\item \mbox{}\\
$\begin{array}{cclll}
\mathcal{M}((\forall R.C)[R := R - (i,j)]) &=& \mathcal{M}(\forall
R.C) + 1\\
&=& \mathcal{M}(C) + 2\\
\mathcal{M}(d_4) &=& max(\mathcal{M}(\neg o_i), \mathcal{M}(d_{4,1}),
\mathcal{M}(o_i), \mathcal{M}(d_{4,2}))\\
&=& max(\mathcal{M}(o_i), max(\mathcal{M}(d_{4,1,1}),
\mathcal{M}(o_j)) + 1,
0, \mathcal{M}(d_{4,2,1}) + 1)\\
&=& max(0, max(\mathcal{M}(C) + 1, 0) + 1, \mathcal{M}(C) + 2)\\
&=& \mathcal{M}(C) + 2\\
\mathcal{M}'((\forall R.C)[R := R - (i,j)]) &=& \mathcal{M}(C[R := R -
(i,j)]) + 1\\
\mathcal{M}'(d_4) &=& max(\mathcal{M}'(\neg o_i), \mathcal{M}'(d_{4,1}),
\mathcal{M}'(o_i), \mathcal{M}'(d_{4,2}))\\
&=& max(\mathcal{M}'(o_i), max(\mathcal{M}'(C[R := R -
(i,j)]),
\mathcal{M}'(o_j)),\\
& & 0, \mathcal{M}'(C[R := R -
(i,j)]))\\
&=& max(0, max(\mathcal{M}'(C[R := R -
(i,j)]), 0), \mathcal{M}'(C[R := R -
(i,j)]))\\
&=& \mathcal{M}'(C[R := R -
(i,j)])
\end{array}$
\item As the definitions of $\mathcal{M}$ and $\mathcal{M}'$ do not
  discriminate $R$ and $R^-$, this rule is identic to the previous one.
\end{enumerate}
where :\\
$\begin{array}{cccccc}
RHS_{16} &=& ((o_i \sqcap \exists U. (o_j\; \sqcap C [R := R + (i,j)])
  \sqcap \forall R. \neg o_j) &\Rightarrow& 
 (\bowtie\; (n-1)\; R\; C [R := R + (i,j)]) )\\
 &\sqcap& ((\neg o_i \sqcup \forall U. (\neg o_j
  \sqcup \neg C [R := R + (i,j)]) \sqcup \exists R. o_j)
  &\Rightarrow&
 (\bowtie\; n\; R\; C [R := R + (i,j)]))\\ 
RHS_{17} &=& ((o_i \sqcap \exists U. (o_j\; \sqcap C [R := R - (i,j)])
  \sqcap \exists R. o_j) &\Rightarrow&
 (\bowtie\; (n+1)\; R\; C [R := R - (i,j)]) )\\
 &\sqcap& ((\neg o_i \sqcup \forall U. (\neg o_j
  \sqcup \neg C [R := R - (i,j)]) \sqcup \forall R. \neg o_j)
  &\Rightarrow&
 (\bowtie\; n\; R\; C [R := R - (i,j)]))\\
a_{1,16} &=& (o_i \sqcap \exists U. (o_j\; \sqcap C [R := R + (i,j)])
  \sqcap \forall R. \neg o_j) &\Rightarrow&
 (\bowtie\; (n-1)\; R\; C [R := R + (i,j)])\\
a_{2,16} &=& (\neg o_i \sqcup \forall U. (\neg o_j
  \sqcup \neg C [R := R + (i,j)]) \sqcup \exists R. o_j)
  &\Rightarrow&
 (\bowtie\; n\; R\; C [R := R + (i,j)])\\
a_{1,17} &=& (o_i \sqcap \exists U. (o_j\; \sqcap C [R := R - (i,j)])
  \sqcap \exists R. o_j) &\Rightarrow&
 (\bowtie\; (n+1)\; R\; C [R := R - (i,j)])\\
a_{2,17} &=& (\neg o_i \sqcup \forall U. (\neg o_j
  \sqcup \neg C [R := R - (i,j)]) \sqcup \forall R. \neg o_j)
  &\Rightarrow&
 (\bowtie\; n\; R\; C [R := R - (i,j)])\\
b_{1,16} &=& o_i\\
b_{2,16} &=& \exists U. (o_j\; \sqcap C [R := R + (i,j)])\\
b_{3,16} &=& \forall R. \neg o_j\\
b_{4,16} &=&  (\bowtie\; (n-1)\; R\; C [R := R + (i,j)]) )\\
b_{11,16} &=& \neg o_i\\
b_{12,16} &=& \forall U. (\neg o_j\; \sqcup \neg (C [R := R +
(i,j)]))\\
b_{13,16} &=& \exists R. o_j\\
b_{14,16} &=&  (\bowtie\; n\; R\; C [R := R + (i,j)]) )\\
b_{1,17} &=& o_i\\
b_{2,17} &=& \exists U. (o_j\; \sqcap C [R := R - (i,j)])\\
b_{3,17} &=& \exists R. o_j\\
b_{4,17} &=&  (\bowtie\; (n+1)\; R\; C [R := R - (i,j)]) )\\
b_{11,17} &=& \neg o_i\\
b_{12,17} &=& \forall U. (\neg o_j\; \sqcup \neg (C [R := R -
(i,j)]))\\
b_{13,17} &=& \forall R. \neg o_j\\
b_{14,17} &=&  (\bowtie\; n\; R\; C [R := R - (i,j)]) )\\
c_{1,16} &=& (o_j\; \sqcap C [R := R + (i,j)])\\
c_{2,16} &=& (\neg o_j\; \sqcup \neg (C [R := R + (i,j)]))\\
c_{1,17} &=& (o_j\; \sqcap C [R := R - (i,j)])\\
c_{2,17} &=& (\neg o_j\; \sqcup \neg (C [R := R - (i,j)]))\\
\end{array}$
$\begin{array}{cclll}
d_{1} &=& (o_i \Rightarrow \exists R. (C [R := R + (i,j)]) \sqcup
\exists U. (o_j \sqcap C[R := R + (i,j)])) \sqcap
(\neg o_i \Rightarrow \exists R .(C[R := R + (i,j)])) \\
&=& (\neg o_i \sqcup \exists R. (C [R := R + (i,j)]) \sqcup
\exists U. (o_j \sqcap C[R := R + (i,j)])) \sqcap
(o_i \sqcup \exists R .(C[R := R + (i,j)]))\\
d_{1,1} &=& \exists R. (C [R := R + (i,j)])\\
d_{1,2} &=& \exists U. (o_j \sqcap C[R := R + (i,j)])\\
d_{1,3} &=& \exists R .(C[R := R + (i,j)])\\
d_{1,2,1} &=& o_j\\
d_{1,2,2} &=& C [R := R + (i,j)]\\
d_{1,3,1} &=& C[R := R + (i,j)]\\
d_2 &=&  (o_i \Rightarrow (\exists R. (C [R := R - (i,j)] \sqcap \neg
o_j))) \sqcap
(\neg o_i \Rightarrow \exists R. (C [R := R - (i,j)]))\\
&=& (\neg o_i \sqcup (\exists R. (C [R := R - (i,j)] \sqcap \neg
o_j))) \sqcap
(o_i \sqcup \exists R. (C [R := R - (i,j)]))\\
d_{2,1} &=& \exists R. (C [R := R - (i,j)] \sqcap \neg o_j)\\ 
d_{2,2} &=& \exists R. (C[R := R - (i,j)])\\
d_{2,1,1} &=&  C [R := R - (i,j)]\\
d_{2,1,2} &=& \neg o_j\\
d_{2,2,1} &=& C [R := R - (i,j)]\\
\end{array}$
\newpage
$\begin{array}{cclll}
d_3 &=& (o_i \Rightarrow (\forall R. (C [R := R + (i,j)]) \sqcap
\exists U. (o_j \sqcap C[R := R + (i,j)])))
\sqcap
 (\neg o_i \Rightarrow \forall R. (C [R := R + (i,j)]))\\
&=& (\neg o_i \sqcup (\forall R. (C [R := R + (i,j)]) \sqcap
\exists U. (o_j \sqcap C[R := R + (i,j)])))
\sqcap
 (o_i \sqcup \forall R. (C [R := R + (i,j)]))\\
d_{3,1} &=& \forall R. (C [R := R + (i,j)])\\
d_{3,2} &=& \exists U. (o_j \sqcap C [R := R + (i,j)])\\
d_{3,3} &=& \forall R. (C [R := R + (i,j)])\\
d_{3,2,1} &=& C [R := R + (i,j)]\\
d_{3,3,1} &=& C [R := R + (i,j)]\\
d_4 &=& (o_i \Rightarrow \forall R.(C[R := R - (i,j)] \sqcup o_j))
\sqcap (\neg o_i \Rightarrow \forall R.(C[R := R - (i,j)]))\\
&=& (\neg o_i \sqcup \forall R.(C[R := R - (i,j)] \sqcup o_j))
\sqcap (o_i \sqcup \forall R.(C[R := R - (i,j)]))\\
d_{4,1} &=& \forall R.(C[R := R - (i,j)] \sqcup o_j)\\
d_{4,2} &=& \forall R.(C[R := R - (i,j)])\\
d_{4,1,1} &=& C[R := R - (i,j)]\\
d_{4,2,1} &=& C[R := R - (i,j)]
\end{array}$
\end{proof}

\end{document}